\definecolor{mygreen}{RGB}{25,127,25}
\newtheorem{theorem}{Theorem}[section]
\newtheorem{proposition}[theorem]{Proposition}
\newtheorem{corollary}[theorem]{Corollary}
\newtheorem{lemma}[theorem]{Lemma}
\newtheorem*{thm*}{Theorem}
\newtheorem*{lem*}{Lemma}
\newtheorem*{prop*}{Proposition}
\theoremstyle{definition}
\newtheorem{definition}{Definition}[section]
\theoremstyle{remark}
\newtheorem{remark}{Remark}[section]
\theoremstyle{example}
\numberwithin{equation}{section}
\newcommand{\mbbR}{{\mathbb R}}
\newcommand{\mbbN}{{\mathbb N}}
\newcommand{\BbC}{{\mathbb C}}
\newcommand{\Hm}{\mathcal{H}}
\newcommand{\G}{\mathcal{G}}
\newcommand{\net}{\mathcal{Z}}
\DeclareMathOperator{\Pdf}{P}
\DeclareMathOperator{\Pd}{\mathcal {P}}
\DeclareMathOperator{\proj}{\pi}
\DeclareMathOperator{\SL}{FS}
\DeclareMathOperator{\Peno}{\mathcal{P}_{\net_{\varepsilon,n}}}
\newcommand{\supp}{\operatorname{supp}}
\newcommand{\dom}{\operatorname{dom}}
\newcommand{\diam}{\operatorname{diam}}
\newcommand{\emf}{\operatorname{emf}}
\newcommand{\osc}{\operatorname{osc}}
\providecommand{\pd}[1]{\Pd_{#1}}
\providecommand{\pdf}[1]{\Pdf_{#1}}
\providecommand{\Zeff}[1]{Z^{\text{eff}}_{#1}}
\title{Power dissipation in fractal Feynman-Sierpinski AC circuits}
\author{Patricia Alonso Ruiz}
\address{Department of Mathematics, University of Connecticut, Storrs, CT 06269}
\email{patricia.alonso-ruiz@uconn.edu}
\subjclass[2010]{28A80; 31C45; 94C05; 78A25}
\keywords{fractal network, power dissipation, harmonic functions, singular measure}
\thanks{This research was partly carried out with the support of the NSF grant DMS-1613025 and the Feodor-Lynen Fellowship program from the Alexander von Humboldt Foundation.}
\begin{document}
	
\begin{abstract}
This paper studies the concept of power dissipation in infinite graphs and fractals associated with passive linear networks consisting of non-dissipative elements. In particular, we analyze the so-called Feynman-Sierpinski ladder, a fractal AC circuit motivated by Feynman's infinite ladder, that exhibits power dissipation and wave propagation for some frequencies. Power dissipation in this circuit is obtained as a limit of quadratic forms, and the corresponding power dissipation measure associated with harmonic potentials is constructed. The latter measure is proved to be continuous and singular with respect to an appropriate Hausdorff measure defined on the fractal dust of nodes of the network.
\end{abstract}
\maketitle
\section{Introduction}
Passive linear networks have a wide range of applications, and especially electrical circuits have since long been intensively studied in different research areas such as electrical engineering~\cite{AV73,Bud74}, physics~\cite{Bru31,BD49} and mathematics~\cite{Wey23,Sma72}. In particular for the latter, Dirichlet forms on finite sets and graphs can be interpreted in terms of electric linear networks by considering the current flow between nodes (vertices) connected by resistors (edges). This is the key idea behind the theory of Dirichlet and resistance forms on fractals introduced by Kigami~\cite{Kig01}. In this context, one may associate these forms with ``fractal networks''.

\medskip

Resistors are just one type of passive components, or \textit{impedances}, of an electrical network. Impedances are characterized by the fact that they produce no energy by themselves. A resistor is a \textit{dissipative} element because power is lost (energy is absorbed) when an alternating current runs through it. On the contrary, no loss is caused when the current flows through a non-dissipative element such as an inductor or a capacitor. Finite linear networks consisting only of inductors and capacitors are uninteresting since no power dissipation is expected.

\medskip

However, what if the network is \textit{infinite} (as for instance fractal networks are)?

\medskip

In the 60s, Feynman posed this ``amusing question'', see~\cite[Section 22.6]{FLS64}; to give an answer, he constructed an \textit{infinite ladder network} as depicted in Figure~\ref{fig IL}. He found its behavior surprising and noticed a very interesting connection with wave propagation: 
depending on the driving frequency of the signal, power will either dissipate, allowing waves to propagate along the network, or it will not dissipate at all, preventing waves from getting through. As a consequence, voltage will either stay constant, merely changing its phase, or it will die away rapidly.
This particular infinite ladder network is what is called a \textit{low-pass filter} because low frequencies ``pass'' while higher frequencies are ``rejected''. Although such an infinite network cannot actually occur, it is often possible to realize fairly good approximations that have many technical applications, see e.g.~\cite[Section 22.7]{FLS64}.

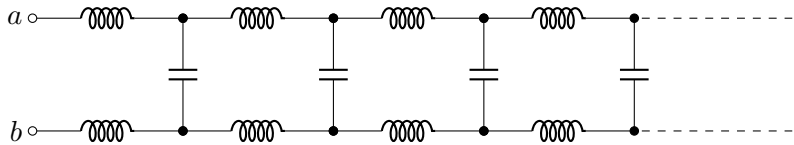
\begin{figure}[H]
\centering
\begin{tikzpicture}
\ctikzset {label/align = straight, bipoles/length=.35in}
\draw ($(0:0)$) node[left] {$b$} to[L, mirror,o-*] ++ ($(0:2)$) to[L, mirror, *-*] ++ ($(0:2)$)  to[L, mirror,*-*] ++ ($(0:2)$) to[L, mirror,*-*] ++ ($(0:2)$);
\draw[dashed]  ($(0:8.1)$) to ++ ($(0:2)$);
\draw ($(90:1.5)$) node[left] {$a$} to[L, mirror,o-*] ++ ($(0:2)$) to[L, mirror,*-*] ++ ($(0:2)$) to[L, mirror,*-*] ++ ($(0:2)$) to[L, mirror,*-*] ++ ($(0:2)$);
\draw[dashed]  ($(0:8.1)+(90:1.5)$) to ++ ($(0:2)$);
\ctikzset {label/align = straight, bipoles/length=.25in}
\draw ($(0:2)$) to[C] ++($(90:1.5)$) ($(0:4)$) to[C] ++($(90:1.5)$) ($(0:6)$) to[C] ++($(90:1.5)$) ($(0:8)$) to[C] ++($(90:1.5)$);
\end{tikzpicture}
\caption{Feynman's infinite ladder}
\label{fig IL}
\end{figure}

Also fractal structures present unusual, a priori unexpected, physical properties~\cite{ABD+12,ADL14}. Feynman's example motivated in~\cite{A++16} the construction of the so-called \textit{Feynman-Sierpinski ladder} (\textit{F-S ladder} for short), see Figure~\ref{fig SL circuit construction}, as a first prototype of a fractal network consisting solely of inductors and capacitors that exhibits power dissipation, and hence wave propagation, at some frequencies.

\medskip

The present paper aims to set up the mathematical framework to study the concept of power dissipation in infinite graphs and fractals, working out in detail the case of the F-S ladder. One of the main novelties lies in the fact that passive linear networks are studied in the frequency domain and this requires voltage, current and impedances to be considered as complex quantities. Following the classical intrinsic approach from analysis on fractals, the power dissipation in the F-S ladder will be defined as the limit of a suitable sequence of quadratic forms over complex-valued functions on its finite graph approximations.

\medskip

A crucial role in this definition is played by the \textit{harmonic functions} on the fractal dust that represents the nodes of the network. These functions describe the equilibrium potentials in a circuit when a signal is applied to the boundary nodes and they guarantee the existence of the aforementioned limit. Furthermore, proving them to be continuous will allow us to fully define power dissipation for harmonic potentials, as well as to construct the power dissipation measure associated with them. The latter measure will turn out to be singular with respect to an appropriate Hausdorff measure defined on the fractal dust related to the network.

\medskip

The paper is organized as follows: In section~\ref{section background}, we review the classical notion of power dissipation in electric passive linear networks and transfer it to graphs and infinite networks. Here, we recall the construction of the F-S ladder and set the first step towards the definition of power dissipation in this network. Section~\ref{section geometry SL} discusses some geometric properties of the projection of the F-S ladder onto $\mbbR^2$. In particular, we prove this set to be a fractal quantum graph. In order to complete the definition of power dissipation, we analyze in section~\ref{setion cp in SL} the harmonic potentials and prove in Theorem~\ref{thm harmonic cont} that they are continuous functions on the (fractal) set of nodes of the F-S ladder. Finally, section~\ref{section cpdm} deals with the construction of a measure associated with power dissipation for harmonic potentials, c.f. Theorem~\ref{thm pd is a measure}. Further, we prove in Theorem~\ref{thm singularity of pd} that this measure is singular with respect to a suitable Hausdorff measure on the set of nodes of the F-S ladder.

\section{Background and preliminaries}\label{section background}

\subsection{Complex AC currents and power dissipation}
Electric linear networks are characterized by the well-known \textit{Ohm's relation} $V=R\cdot I$, where $V$ denotes voltage, $I$ current and $R$ electric resistance. In general, passive linear networks are studied through Fourier transforms in the so-called frequency domain, requiring voltage and current to be time dependent. They are typically considered as complex quantities given by
\[
V(t)=\hat{V}e^{i(\omega t-\varphi_V)},\qquad I(t)=\hat{I}e^{i(\omega t-\varphi_I)},
\]
where $\omega$ denotes the \textit{frequency} and $\varphi_I-\varphi_V$ is the \textit{phase shift} or \textit{phase difference}. The corresponding generalization of Ohm's relation~\cite{Bru31} now becomes
\begin{equation}\label{eq ohms law}
V(t)=Z\cdot I(t),
\end{equation}
where $Z=|Z|e^{i(\varphi_I-\varphi_V)}$ is called \textit{impedance} and represents a ``complex resistance'' whose absolute value depends on the frequency $\omega$. For ease of the notation, we will assume $\varphi_V=0$ and write $\varphi:=\varphi_I$ for the phase shift, so that $Z=|Z|e^{i\varphi}$.

\medskip

Due to Kirchhoff's rules, see e.g.~\cite[Section 22.4]{FLS64}, the \textit{electromotive force} of a generator connected to a linear circuit of several impedances satisfies that 
\[
\emf(t)=Z^{\text{eff}}\cdot I(t)=|Z^{\text{eff}}|e^{i\varphi}I(t)=|Z^{\text{eff}}|\hat{I}e^{i\omega t},
\]
where $Z^{\text{eff}}$ is the so-called \textit{effective/characteristic impedance}, which represents an im\-pe\-dance equivalent to the initial set. 
The \textit{power dissipation}
, also called energy dissipation or average rate of energy loss, is given by
\begin{align*}
P&=\frac{1}{T}\int_0^T\Re(\emf(t))\Re(I(t))\,dt\\
&=\frac{1}{T}\int_0^T|Z^{\text{eff}}|\hat{I}^2\cos^2(\omega t)\cos\varphi\,dt+\frac{1}{T}\int_0^T|Z^{\text{eff}}|\hat{I}^2\cos(\omega t)\sin(\omega t)\sin\varphi\,dt\\
&=\frac{1}{2}|Z^{\text{eff}}|\hat{I}^2\cos\varphi=\frac{1}{2}|I(t)|^2\Re(Z^{\text{eff}}).
\end{align*}
Notice that this quantity only depends on the real part of the effective impedance. Consequently, a purely complex impedance is called a \textit{non-dissipative} element because, on average, there is no loss of electrical power (energy) when an alternating current runs through it.  
In order to have non-trivial power dissipation, 
it is thus necessary that the effective impedance of the circuit has positive real part. Motivated by the physical concept of power dissipation, we introduce next a quadratic form on graphs resembling this phenomenon.

\subsection{Power dissipation in finite graphs}

Let us start by considering a simple graph with two vertices $x,y$ joined by an edge $\{x,y\}$ and a network $\net$ consisting in a single impedance $Z_{xy}$ with nonzero real part. In this case, $Z_{xy}$ coincides with the effective impedance of the circuit.

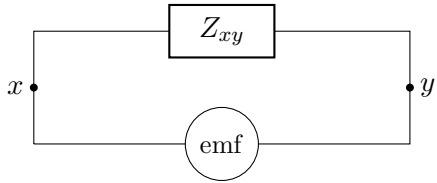
\begin{figure}[H]
\centering
\begin{tikzpicture}
[lab/.style={draw, circle, fill=white}, rect/.style={draw,rectangle,thick,minimum width=.55in,minimum height=.275in, fill=white}, ]
\coordinate (emf) at (0,0);
\coordinate (lx) at (-2.5,0);
\coordinate (rx) at (-2.5,1.5);
\coordinate[label= left:{$x$}] (x) at (-2.5,0.75);
\fill (x) circle (1.5pt);
\coordinate (Z) at (0,1.5);
\coordinate (ry) at (2.5,1.5);
\coordinate[label= right:{$y$}] (y) at (2.5,0.75);
\fill (y) circle (1.5pt);
\coordinate (ly) at (2.5,0);
\draw (lx) -- (rx) -- node[midway, rect] {$Z_{xy}$} (ry) -- (ly) -- node[midway, lab,] {\small{$\emf$}} (lx);
\end{tikzpicture}
\caption{Basic $1$-edge network.}
\label{simple network}
\end{figure}

The voltage across the edge $\{x,y\}$ is defined as the difference between the potential at each node. Following Ohm's law~\eqref{eq ohms law}, for any potential function $(v(x),v(y))\in\BbC^2$ the current flowing from $x$ to $y$ is given by
\begin{equation}\label{eq new Ohm}
I_{xy}=\frac{v(y)-v(x)}{Z_{xy}}.
\end{equation}
Although these quantities actually depend on time, we will consider this parameter fixed and omit it hereafter in the whole discussion.
\begin{definition}
The power dissipation associated with the network $\net=\{Z_{xy}\}$ is defined as the quadratic form $\pd{\omega,\net}\colon\BbC^2\to\mbbR$ given by
\begin{equation}\label{eq P(v)}
\pd{\omega,\net}[v]_{xy}=\frac{1}{2}\frac{\Re(Z_{xy})}{|Z_{xy}|^2}|v(x)-v(y)|^2,
\end{equation}
equivalently
\[
\pd{\omega,\net}[v]_{xy}=\frac{\cos\varphi_{xy}}{2|Z_{xy}|}|v(x)-v(y)|^2,
\]
where $\varphi_{xy}$ is the phase shift associated with $Z_{xy}$.
\end{definition}
The subindex $\omega$ refers to the dependence of this expression on the frequency. For ease of the reading we will drop it off in the sequel and refer to this dependence explicitly only when confusion may occur.
\medskip

%
Once power dissipation is defined for functions on a simple ($2$-node) network, this concept naturally extends to networks in graphs with several vertices and edges. In order for~\eqref{eq new Ohm} to provide a one-to-one correspondence between potentials and currents and thus identify functions on edges with functions on vertices, we will restrict our discussion to graphs/networks without multiple edges.
\medskip

Let us now consider a finite graph $\G=(V,E)$ and a network $\net=\{Z_{xy}~|~\{x,y\}\in E\}$ consisting of impedances $Z_{xy}$ attached to each edge $\{x,y\}$. Further, we denote by $\ell(V)$ the space of complex-valued functions on $V$. 
\begin{definition}
The quadratic form $\pd{\net}\colon\ell(V)\to\mbbR$ given by
\begin{equation}\label{eq general def P(v)}
\pd{\net}[v]:=\sum_{\{x,y\}\in E}\pd{\net}[v]_{xy}
\end{equation}
is called the \textit{power dissipation} in $\G$ associated with the network $\net$. 
\end{definition}
\begin{remark}
In the case of time-independent circuits with purely real impedances (resistors), power dissipation coincides with the classical definition of a \textit{resistance/energy form}. Indeed, if the quantities $Z_{x,y}$, $I_{xy}$ and the function $v$ are real,~\eqref{eq P(v)} becomes
\[
\pd{\net}[v]=\frac{1}{2}\sum_{\{x,y\}\in E}\frac{1}{Z_{xy}}(v(x)-v(y))^2,
\]
where $Z_{xy}$ is the resistance between $x$ and $y$. 
\end{remark}

In the next section we extend the notion of power dissipation to infinite graphs and networks, focusing on the particular case of the \textit{Feynman-Sierpinski ladder} (F-S ladder), for which some useful computations and results have been obtained in~\cite{A++16}.

\subsection{Power dissipation in infinite networks. The F-S ladder}\label{section PD in infinite networks}
The F-S ladder circuit that we denote by $\net_{\SL}$ was introduced in~\cite{A++16} as a fractal network whose underlying graph structure is described in Figure~\ref{fig graphs G_n}. The infinite graph that arises in the limit can be formally embedded in $\mbbR^2$; existence and geometric properties of this set are discussed in Section~\ref{section geometry SL}.

\begin{figure}[H]\centering
\begin{tabular}{cccc}
\begin{tikzpicture}[scale=2/3]
\draw($(90:3)$) -- ($(90+120:3)$) -- node[midway, below] {$\G_0$} ($(90+120*2:3)$) -- cycle;\end{tikzpicture}
&
\begin{tikzpicture}[scale=2/3]
\foreach \a in {0,1,2} {
\draw ($(90+120*\a:1)+(90:1)$) -- ($(90+120*\a:1)+(210:1)$) -- ($(90+120*\a:1)+(330: 1)$)--cycle; 
}
\draw ($(90:2)$) -- ($(90:3)$) -- ($(90+120:3)$)  ($(90+120:2)$) -- ($(90+120:3)$)  -- ($(90+120*2:3)$) node[midway, below] {$\G_1$} ($(90+120*2:2)$) --($(90+120*2:3)$) -- ($(90:3)$);

\end{tikzpicture}
&
\begin{tikzpicture}[scale=2/9]
\foreach \b in {0,1,2} {
\foreach \a in {0,1,2} {

\draw ($(90+120*\b:3)+(90+120*\a:1)+(90:1)$) -- ($(90+120*\b:3)+(90+120*\a:1)+(210:1)$) -- ($(90+120*\b:3)+(90+120*\a:1)+(330: 1)$)--cycle; 
}
\draw ($(90+120*\b:3)+(90:2)$) -- ($(90+120*\b:3)+(90:3)$) -- ($(90+120*\b:3)+(90+120:3)$)  ($(90+120*\b:3)+(90+120:2)$) -- ($(90+120*\b:3)+(90+120:3)$)  -- ($(90+120*\b:3)+(90+120*2:3)$)  ($(90+120*\b:3)+(90+120*2:2)$) --($(90+120*\b:3)+(90+120*2:3)$) -- ($(90+120*\b:3)+(90:3)$);
}
\draw ($(90:6)$) -- ($(90:9)$) -- ($(90+120:9)$)  ($(90+120:6)$) -- ($(90+120:9)$) -- ($(90+120*2:9)$) node[midway, below] {$\G_2$} ($(90+120*2:6)$) --($(90+120*2:9)$) -- ($(90:9)$);
\end{tikzpicture}
&
\begin{tikzpicture}[scale=2/27]
\foreach \c in {0,1,2} {
\foreach \b in {0,1,2} {
\foreach \a in {0,1,2} {

\draw ($(90+120*\c:9)+(90+120*\b:3)+(90+120*\a:1)+(90:1)$) -- ($(90+120*\c:9)+(90+120*\b:3)+(90+120*\a:1)+(210:1)$) -- ($(90+120*\c:9)+(90+120*\b:3)+(90+120*\a:1)+(330:1)$)--cycle; 
}
\draw ($(90+120*\c:9)+(90+120*\b:3)+(90:2)$) -- ($(90+120*\c:9)+(90+120*\b:3)+(90:3)$) -- ($(90+120*\c:9)+(90+120*\b:3)+(90+120:3)$)  ($(90+120*\c:9)+(90+120*\b:3)+(90+120:2)$) -- ($(90+120*\c:9)+(90+120*\b:3)+(90+120:3)$)  -- ($(90+120*\c:9)+(90+120*\b:3)+(90+120*2:3)$)  ($(90+120*\c:9)+(90+120*\b:3)+(90+120*2:2)$) --($(90+120*\c:9)+(90+120*\b:3)+(90+120*2:3)$) -- ($(90+120*\c:9)+(90+120*\b:3)+(90:3)$);
}
\draw ($(90+120*\c:9)+(90:6)$) -- ($(90+120*\c:9)+(90:9)$) -- ($(90+120*\c:9)+(90+120:9)$) ($(90+120*\c:9)+(90+120:6)$) -- ($(90+120*\c:9)+(90+120:9)$)  -- ($(90+120*\c:9)+(90+120*2:9)$)  ($(90+120*\c:9)+(90+120*2:6)$) --($(90+120*\c:9)+(90+120*2:9)$) -- ($(90+120*\c:9)+(90:9)$);
}
\draw ($(90:18)$) -- ($(90:27)$) -- ($(90+120:27)$)  ($(90+120:18)$) -- ($(90+120:27)$)  -- ($(90+120*2:27)$) node[midway, below] {$\G_3$} ($(90+120*2:18)$) --($(90+120*2:27)$) -- ($(90:27)$);
\end{tikzpicture}
\end{tabular}
\caption{Approximation of $\G_\infty$ by graphs.}
\label{fig graphs G_n}
\end{figure}
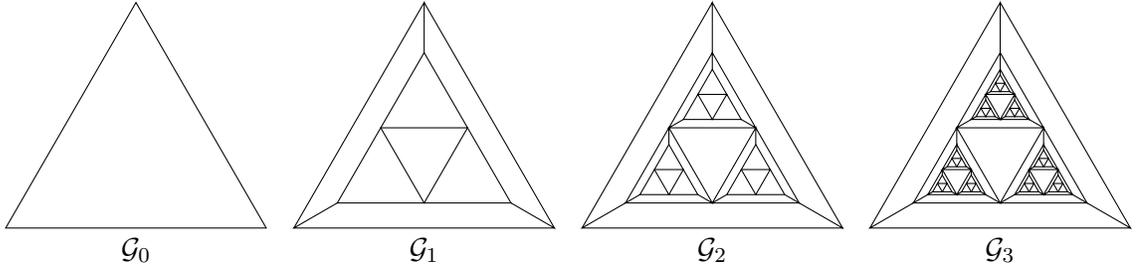
Let $\G_n=(V_n,E_n)$, $n\geq 0$, be the graphs displayed in Figure~\ref{fig graphs G_n} and let $\G_\infty$ denote the limit (in the Gromov-Hausdorff metric) of the sequence $\{\G_n\}_{n\geq 0}$. Notice that this limit exists in view of Proposition~\ref{prop existence Qalpha} and Remark~\ref{rem Qalpha}.
The F-S ladder $\net_{\SL}$ is the infinite network on $\G_\infty$ whose edges have impedances that are capacitors $Z_C=\frac{1}{i\omega C}$ or inductors $Z_L=i\omega L$ with $C,L>0$ as shown in Figure~\ref{fig SL circuit construction}. By convention, the symbol $\dashv\,\vdash$ is employed for capacitors, and $\gluon$ for inductors. 
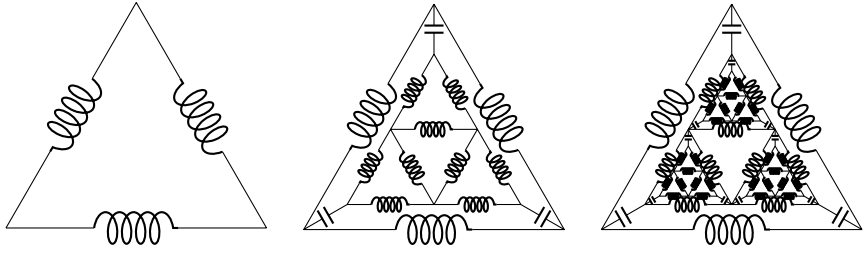
\begin{figure}[H]
\centering
\begin{tabular}{ccc}
\begin{circuitikz}[scale=2/3]
\draw ($(90:3)$)  to[L, mirror] ($(90+120:3)$) to[L, mirror] ($(90+120*2:3)$) to[L, mirror] ($(90:3)$);
\end{circuitikz}
&
\begin{circuitikz}[scale=2/3]
\foreach \a in {0,1,2} {
\ctikzset {label/align = straight, bipoles/length=.25in}
\draw ($(90+120*\a:1)+(90:1)$) to[L, mirror] ($(90+120*\a:1)+(210:1)$) to[L, mirror] ($(90+120*\a:1)+(330: 1)$) to[L, mirror] ($(90+120*\a:1)+(90:1)$); 
}
\ctikzset {label/align = straight, bipoles/length=.5in, bipoles/capacitor/height=.2, bipoles/capacitor/width=.075}
\draw ($(90:2)$) to[C] ($(90:3)$) to[L,  mirror] ($(90+120:3)$)  ($(90+120:2)$) to[C] ($(90+120:3)$) to[L,  mirror] ($(90+120*2:3)$) ($(90+120*2:2)$) to[C] ($(90+120*2:3)$) to[L, mirror] ($(90:3)$);
\end{circuitikz}
&
\begin{circuitikz}[scale=2/9]
\foreach \b in {0,1,2} {
\foreach \a in {0,1,2} {
\ctikzset {label/align = straight, bipoles/length=.1in}
\draw ($(90+120*\b:3)+(90+120*\a:1)+(90:1)$) to[L,  mirror] ($(90+120*\b:3)+(90+120*\a:1)+(210:1)$) to[L,  mirror] ($(90+120*\b:3)+(90+120*\a:1)+(330: 1)$) to[L,  mirror] ($(90+120*\b:3)+(90+120*\a:1)+(90:1)$); 
}
\ctikzset {label/align = straight, bipoles/length=.25in, bipoles/capacitor/height=.15, bipoles/capacitor/width=.075}
\draw ($(90+120*\b:3)+(90:2)$) to[C] ($(90+120*\b:3)+(90:3)$);
\draw ($(90+120*\b:3)+(90:3)$) to[L, mirror] ($(90+120*\b:3)+(90+120:3)$)  ($(90+120*\b:3)+(90+120:2)$) to[C] ($(90+120*\b:3)+(90+120:3)$)  to[L, mirror] ($(90+120*\b:3)+(90+120*2:3)$)  ($(90+120*\b:3)+(90+120*2:2)$) to[C] ($(90+120*\b:3)+(90+120*2:3)$) to[L, mirror] ($(90+120*\b:3)+(90:3)$);
}
\ctikzset {label/align = straight, bipoles/length=.5in, bipoles/capacitor/height=.2, bipoles/capacitor/width=.075}
\draw ($(90:6)$) to[C] ($(90:9)$) to[L, mirror] ($(90+120:9)$)  ($(90+120:6)$) to[C] ($(90+120:9)$) to[L, mirror] ($(90+120*2:9)$)  ($(90+120*2:6)$) to[C] ($(90+120*2:9)$) to[L, mirror] ($(90:9)$);
\end{circuitikz}
\end{tabular}
\caption{Construction of the Feynman-Sierpinski ladder circuit.}
\label{fig SL circuit construction}
\end{figure}

As for any passive linear network, power dissipation in $\net_{\SL}$ is only meaningful if its effective impedance $\Zeff{\SL}$, that depends on the frequency $\omega$, has positive real part. In the case of the F-S ladder, this is satisfied under the \textit{filter condition}
\begin{equation}\label{FC}
9(4-\sqrt{15})<2\omega^2LC<9(4+\sqrt{15}),
\end{equation}
c.f.~\cite[Theorem 3.1]{A++16}.

\medskip

Following the intrinsic approach from analysis on fractals, we introduce next a sequence of networks on the approximating graphs $\G_n$ that will eventually lead to the definition of the power dissipation in $\G_\infty$ associated with $\net_{\SL}$.
\subsection*{Networks $\net_{\varepsilon,n}=\{Z_{\varepsilon,n,xy}~|~\{x,y\}\in E_n\}$}
At each level $n\geq 1$, the network $\net_{\varepsilon,n}$ is constructed by adding a small positive resistance $\varepsilon$ in series with each of the impedances of $\net_{\SL}$, see Figure~\ref{SL epsilon approx}.  Thus, for each $\{x,y\}\in E_n$,
\begin{equation}\label{eq def Zepsilon}
Z_{\varepsilon,n,xy}=Z_{\SL,xy}+\varepsilon,
\end{equation}
where $Z_{\SL,xy}\in\{Z_C, Z_L\}$ according to the previous construction in Figure~\ref{fig SL circuit construction}. Under the filter condition~\eqref{FC}, we know from~\cite[Theorem 3.2]{A++16} that the effective impedance of the F-S ladder, $\Zeff{\SL}$, is the regularized limit of the effective impedances of the networks $\{\net_{\varepsilon,n}\}_{n\geq 1}$, i.e.
\begin{equation}\label{eq reg limit}
\Zeff{\SL}=\lim_{\varepsilon\to 0_+}\lim_{n\to\infty}\Zeff{\varepsilon,n}.
\end{equation}

Furthermore, we set $Z_{\varepsilon,0,xy}=\Zeff{\varepsilon}:=\lim\limits_{n\to\infty}\Zeff{\varepsilon,n}$ for all $\{x,y\}\in E_0$, and from now on assume that the F-S ladder satisfies the filter condition~\eqref{FC}.

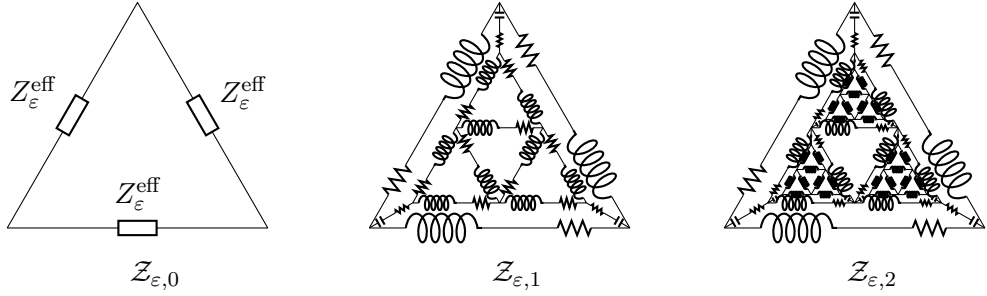
\begin{figure}[H]
\centering
\begin{tabular}{ccc}
\begin{circuitikz}[scale=2/3]
\ctikzset {label/align = straight, bipoles/length=.25in, bipoles/resistor/height=.15, bipoles/resistor/width=.2}
\draw ($(90:3)$) to[R=$\Zeff{\varepsilon}$, mirror] ($(90+120:3)$) to[R=$\Zeff{\varepsilon}$] ($(90+120*2:3)$)to[R=$\Zeff{\varepsilon}$, mirror] ($(90:3)$);
\node[below] at ($(280:2)$) {$\net_{\varepsilon,0}$};
\end{circuitikz}
\hspace*{.25in}
&
\begin{circuitikz}[scale=2/3]
\foreach \a in {0,1,2} {
\ctikzset {label/align = straight, bipoles/length=.25in, bipoles/capacitor/height=.2, bipoles/capacitor/width=.075,bipoles/resistor/height=.2, bipoles/resistor/width=.35}
\draw ($(90+120*\a:1)+(90:1)$) to[L, mirror] ++($(90+150:1)$) to[american resistor] ($(90+120*\a:1)+(210:1)$) to[L, mirror] ++($(0:1)$) to[american resistor] ($(90+120*\a:1)+(330: 1)$) to[L, mirror] ++($(90+30:1)$) to[american resistor] ($(90+120*\a:1)+(90:1)$); 
}
\ctikzset {label/align = straight, bipoles/length=.5in, bipoles/capacitor/height=.1, bipoles/capacitor/width=.05,bipoles/resistor/height=.05, bipoles/resistor/width=.15}
\draw ($(90:2)$) to[american resistor] ++($(90:3ex)$) to[C] ($(90:3)$) ($(90+120:2)$) to[american resistor] ++($(90+120:3ex)$) to[C] ($(90+120:3)$) ($(90+120*2:2)$) to[american resistor] ++($(90+120*2:3ex)$) to[C] ($(90+120*2:3)$);

\ctikzset {label/align = straight, bipoles/length=.5in,bipoles/resistor/height=.15, bipoles/resistor/width=.35}
\draw ($(90:3)$) to[L,  mirror] ++($(90+150:18ex)$) to[american resistor] ($(90+120:3)$) to[L,  mirror] ++($(0:18ex)$) to[american resistor] ($(90+120*2:3)$) to[L, mirror] ++($(120:18ex)$) to[american resistor] ($(90:3)$);
\node[below] at ($(280:2)$) {$\net_{\varepsilon,1}$};
\end{circuitikz}
\hspace*{.25in}
&
\begin{circuitikz}[scale=2/9]
\foreach \b in {0,1,2} {
\foreach \a in {0,1,2} {
\ctikzset {label/align = straight, bipoles/length=.1in}
\draw ($(90+120*\b:3)+(90+120*\a:1)+(90:1)$) to[L, mirror] ($(90+120*\b:3)+(90+120*\a:1)+(210:1)$) to[L, mirror] ($(90+120*\b:3)+(90+120*\a:1)+(330: 1)$) to[L, mirror] ($(90+120*\b:3)+(90+120*\a:1)+(90:1)$); 
}

\ctikzset {label/align = straight, bipoles/length=.1in, bipoles/capacitor/height=.1, bipoles/capacitor/width=.075}
\draw ($(90+120*\b:3)+(90:2)$)  to[C] ($(90+120*\b:3)+(90:3)$);

\ctikzset {label/align = straight, bipoles/length=.25in,bipoles/resistor/height=.1, bipoles/resistor/width=.25}
\draw ($(90+120*\b:3)+(90:3)$) to[L, mirror] ++($(90+150:20ex)$) to[american resistor] ($(90+120*\b:3)+(90+120:3)$)  ($(90+120*\b:3)+(90+120:2)$) to[C] ($(90+120*\b:3)+(90+120:3)$)  to[L, mirror] ++($(0:20ex)$) to[american resistor] ($(90+120*\b:3)+(90+120*2:3)$)  ($(90+120*\b:3)+(90+120*2:2)$) to[C] ($(90+120*\b:3)+(90+120*2:3)$) to[L, mirror] ++($(120:20ex)$) to[american resistor] ($(90+120*\b:3)+(90:3)$);
}
\ctikzset {label/align = straight, bipoles/length=.5in, bipoles/capacitor/height=.1, bipoles/capacitor/width=.05,bipoles/resistor/height=.05, bipoles/resistor/width=.15}
\draw ($(90:6)$) to[american resistor] ++($(90:9ex)$) to[C] ($(90:9)$) ($(90+120:6)$) to[american resistor] ++($(90+120:9ex)$) to[C] ($(90+120:9)$) ($(90+120*2:6)$) to[american resistor] ++($(90+120*2:9ex)$) to[C] ($(90+120*2:9)$); 
\ctikzset {label/align = straight, bipoles/length=.5in,bipoles/resistor/height=.15, bipoles/resistor/width=.35}
\draw ($(90:9)$) to[L, mirror] ++($(90+150:54ex)$) to[american resistor] ($(90+120:9)$) ($(90+120:9)$) to[L, mirror] ++($(0:54ex)$) to[american resistor] ($(90+120*2:9)$)  ($(90+120*2:9)$) to[L, mirror] ++($(120:54ex)$) to[american resistor] ($(90:9)$);
\node[below] at ($(280:6)$) {$\net_{\varepsilon,2}$};
\end{circuitikz}
\end{tabular}
\caption{Approximating networks $\net_{\varepsilon,n}$.}
\label{SL epsilon approx}
\end{figure}

In view of~\eqref{eq reg limit}, the sequence of networks $\{\net_{\varepsilon,n}\}_{n\geq 0}$ provides the base towards the desired definition of power dissipation.

\begin{definition}\label{def pd graphs}
Let $V_*=\bigcup_{n\geq 0}V_n$. The power dissipation in $\G_\infty$ associated with the F-S ladder is the quadratic form $\pd{\SL}\colon\dom\pd{\SL}\to\mbbR$ given by
\[
\pd{\SL}[v]:=\lim_{\varepsilon\to 0_+}\lim_{n\to\infty}\Peno[v_{|_{V_n}}]
\]
and $\dom\pd{\SL}=\{v\in\ell(V_*)~|~\pd{\SL}[v]<\infty\}$.
\end{definition}

The embedding of the infinite graph $\G_\infty$ in $\mbbR^2$ presented in Section~\ref{section geometry SL} will reveal that the set of vertices of $\G_\infty$ is actually larger than $V_*$, so that the latter form is actually incomplete. Its definition for potentials defined on the whole network will appear in Section~\ref{setion cp in SL}, c.f. Definition~\ref{def pd SL}.
\begin{remark}\label{rem about Zepsn}
At each approximating level, the network $\net_{\varepsilon,n}$ is equivalent to a triangular network with impedances $\Zeff{\varepsilon,n}$. Thus, for any $n\geq 0$ and $u\in\ell(V_0)$,
\[
\min\{\Peno[v]~|~v\in\ell(V_n),v_{|_{V_0}}=u\}=\frac{\Re(\Zeff{\varepsilon,n})}{2|\Zeff{\varepsilon,n}|^2}\sum_{\{x,y\}\in E_0}|u(x)-u(y)|^2.
\]
\end{remark}

The latter remark is directly related to \textit{harmonic functions}. These describe the equilibrium states of the F-S ladder when a potential is connected to the boundary vertices of the circuit, which in this case consists of the three vertices in $V_0$. More precisely, a function $h\in\ell(V_*)$ is said to be harmonic if for any $\varepsilon>0$ and $n\geq 1$
\[
\pd{\net_{\varepsilon,0}}[h_{|_{V_0}}]=\Peno[h_{|_{V_n}}].
\]
In particular, $\pd{\net_{\SL}}[h]=\lim\limits_{\varepsilon\to 0_+}\pd{\net_{\varepsilon,n}}[h_{|_{V_n}}]$ for any $n\geq 0$. Since $V_0$ has three elements, the space of harmonic functions on $V_*$, that we denote by $\Hm_{\SL}(V_*)$, is a $3$-dimensional subspace of $\dom\pd{\SL}$. 

\medskip

In connection with harmonic functions we introduce the following auxiliary networks on the approximating graphs $\G_n$. 
\subsection*{Networks $\net_n=\{Z_{n,xy}~|~\{x,y\}\in E_n\}$}
At each $n\geq 1$, this network is constructed by changing the impedance of edges building triangles in the ``deepest approximation level'', i.e. $\{x,y\}\in E_n\setminus E_{n-1}$ with $x,y\in V_n\setminus V_{n-1}$, to equal the effective impedance of the whole network $\net_{\SL}$. The elements of $\net_n$ are thus given by
\begin{equation*}\label{eq def Zen}
Z_{n,xy}=\left\{
\begin{array}{ll}
\Zeff{\SL}& \text{if }\{x,y\}\in E_n\setminus E_{n-1},x,y\in V_n\setminus V_{n-1},\\
&\\
Z_{\SL,xy} &\text{otherwise.}
\end{array}
\right.
\end{equation*}
For completeness, we set $Z_{0,xy}=\Zeff{\SL}$ for all $\{x,y\}\in E_0$. One of the most relevant differences between $\net_n$ and $\net_{\varepsilon,n}$ is that the impedance of the edges changes with the approximation level. Moreover, the impedance of edges building triangles have non-zero real part, whereas the impedance of the remaining edges is purely imaginary.

\begin{figure}[H]
\centering
\begin{tabular}{ccc}
\begin{circuitikz}[scale=2/3]
\ctikzset {label/align = straight, bipoles/length=.25in, bipoles/resistor/height=.15, bipoles/resistor/width=.2}
\draw ($(90:3)$) to[R=\small{$\Zeff{\SL}$}, mirror] ($(90+120:3)$) to[R=\small{$\Zeff{\SL}$}] ($(90+120*2:3)$)to[R=\small{$\Zeff{\SL}$}, mirror] ($(90:3)$);
\node[below] at ($(280:2)$) {$\net_{\varepsilon,0}$};
\end{circuitikz}
&
\begin{circuitikz}[scale=2/3]
\ctikzset {label/align = straight, bipoles/length=.5in, bipoles/capacitor/height=.2, bipoles/capacitor/width=.075}
\foreach \a in {0,1,2} {
\draw ($(90+120*\a:1)+(90:1)$) -- ($(90+120*\a:1)+(210:1)$) -- ($(90+120*\a:1)+(330: 1)$) -- ($(90+120*\a:1)+(90:1)$); 
}
\draw ($(90:2)$) to[C] ($(90:3)$) to[L,  mirror] ($(90+120:3)$)  ($(90+120:2)$) to[C] ($(90+120:3)$) to[L,  mirror] ($(90+120*2:3)$) ($(90+120*2:2)$) to[C] ($(90+120*2:3)$) to[L, mirror] ($(90:3)$);
\node[below] at ($(280:2)$) {$\net_1$};
\end{circuitikz}
&
\begin{circuitikz}[scale=2/9]
\foreach \b in {0,1,2} {
\ctikzset {label/align = straight, bipoles/length=.25in, bipoles/capacitor/height=.15, bipoles/capacitor/width=.075}

\foreach \a in {0,1,2} {
\ctikzset {label/align = straight, bipoles/length=.2in}
\draw ($(90+120*\b:3)+(90+120*\a:1)+(90:1)$) -- ($(90+120*\b:3)+(90+120*\a:1)+(210:1)$) -- ($(90+120*\b:3)+(90+120*\a:1)+(330: 1)$) -- ($(90+120*\b:3)+(90+120*\a:1)+(90:1)$); 
}
\draw ($(90+120*\b:3)+(90:2)$) to[C] ($(90+120*\b:3)+(90:3)$);
\draw ($(90+120*\b:3)+(90:3)$) to[L, mirror] ($(90+120*\b:3)+(90+120:3)$)  ($(90+120*\b:3)+(90+120:2)$) to[C] ($(90+120*\b:3)+(90+120:3)$)  to[L, mirror] ($(90+120*\b:3)+(90+120*2:3)$)  ($(90+120*\b:3)+(90+120*2:2)$) to[C] ($(90+120*\b:3)+(90+120*2:3)$) to[L, mirror] ($(90+120*\b:3)+(90:3)$);
}
\ctikzset {label/align = straight, bipoles/length=.5in, bipoles/capacitor/height=.2, bipoles/capacitor/width=.075}
\draw ($(90:6)$) to[C] ($(90:9)$) to[L, mirror] ($(90+120:9)$)  ($(90+120:6)$) to[C] ($(90+120:9)$) to[L, mirror] ($(90+120*2:9)$)  ($(90+120*2:6)$) to[C] ($(90+120*2:9)$) to[L, mirror] ($(90:9)$);
\node[below] at ($(280:6)$) {$\net_2$};
\end{circuitikz}
\end{tabular}
\caption{Approximating networks $\net_n$.}
\label{SL effective approx}
\end{figure}

In view of~\cite[Theorem 3.1]{A++16}, the networks $\net_n$ are all electrically equivalent. This fact relates them directly to harmonic functions and power dissipation, as the next proposition shows.
\begin{proposition}\label{prop harmonic in V}
For any $h\in\Hm_{\SL}(V_*)$ it holds that
\[
\pd{\net_{\SL}}[h]=\pd{\net_n}[h_{|_{V_n}}]\qquad~\forall~n\geq 0.
\]
\end{proposition}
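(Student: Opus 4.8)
The plan is to route both sides through the coarsest network $\net_0$, which is simply the symmetric triangle on $V_0$ with edge impedance $\Zeff{\SL}$, so that $\pd{\net_0}[u]=\frac{\Re(\Zeff{\SL})}{2|\Zeff{\SL}|^2}\sum_{\{x,y\}\in E_0}|u(x)-u(y)|^2$, and to show that $\pd{\net_{\SL}}[h]$ and every $\pd{\net_n}[h_{|_{V_n}}]$ equal $\pd{\net_0}[h_{|_{V_0}}]$. For the first of these I would specialize to $n=0$ the identity $\pd{\net_{\SL}}[h]=\lim_{\varepsilon\to0_+}\pd{\net_{\varepsilon,n}}[h_{|_{V_n}}]$ (valid for every $n\geq0$): since $\net_{\varepsilon,0}$ is the triangle with edge impedance $\Zeff{\varepsilon}$,
\[
\pd{\net_{\SL}}[h]=\lim_{\varepsilon\to0_+}\pd{\net_{\varepsilon,0}}[h_{|_{V_0}}]=\lim_{\varepsilon\to0_+}\frac{\Re(\Zeff{\varepsilon})}{2|\Zeff{\varepsilon}|^2}\sum_{\{x,y\}\in E_0}|h(x)-h(y)|^2.
\]
By~\eqref{eq reg limit} one has $\Zeff{\varepsilon}\to\Zeff{\SL}$ as $\varepsilon\to0_+$, and the filter condition~\eqref{FC} forces $\Re(\Zeff{\SL})>0$, hence $\Zeff{\SL}\neq0$ and $z\mapsto\Re(z)/|z|^2$ is continuous at $\Zeff{\SL}$; passing to the limit gives $\pd{\net_{\SL}}[h]=\pd{\net_0}[h_{|_{V_0}}]$.

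For $n\geq1$ the decisive point is that a harmonic $h$ is an equilibrium potential, so that $h_{|_{V_n}}$ is a Kirchhoff equilibrium of $\net_n$ for the boundary data $h_{|_{V_0}}$. This follows from the self-similar electrical structure: $\net_{\SL}$ decomposes into its junction edges of levels $0,\dots,n-1$ together with the $3^n$ nested deepest copies of $\net_{\SL}$ carried by the depth-$n$ cells; eliminating the interior nodes of each such copy replaces it by the symmetric $\Zeff{\SL}$-triangle to which it is electrically equivalent by~\cite[Theorem 3.1]{A++16}, without altering the potential on the remaining nodes, which are exactly $V_n$, and the resulting reduced network is precisely $\net_n$. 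Next, for any potential $v$ one has $\pd{\net_n}[v]=\sum_{\{x,y\}\in E_n}\tfrac12\Re(Z_{n,xy})|I_{xy}|^2$ with $I_{xy}=(v(y)-v(x))/Z_{n,xy}$, i.e.\ the total time-averaged power dissipated in $\net_n$; when $v$ is an equilibrium potential, conservation of energy over a period identifies this with the average power supplied at the terminals $V_0$, so it depends only on $v_{|_{V_0}}$ and the terminal current--voltage relation. Since $\net_n$ and $\net_0$ are electrically equivalent (again by~\cite[Theorem 3.1]{A++16}), this yields $\pd{\net_n}[h_{|_{V_n}}]=\pd{\net_0}[h_{|_{V_0}}]$, and combined with the case $n=0$ we conclude $\pd{\net_{\SL}}[h]=\pd{\net_0}[h_{|_{V_0}}]=\pd{\net_n}[h_{|_{V_n}}]$ for all $n\geq0$.

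The main obstacle is making rigorous the two ``physically evident'' facts used in the $n\geq1$ step in the complex AC regime: (a) that the reactive part of the instantaneous power has zero average, so that the time-averaged dissipation of an equilibrium state is governed by the terminal behaviour alone, and (b) the legitimacy of eliminating the interior of each \emph{infinite} deepest copy of $\net_{\SL}$ and replacing it by its equivalent triangle; point (b) is where the infinite character of the network enters, and is cleanest to handle by performing the node elimination in the finite regularized networks, where unique equilibrium potentials exist and Remark~\ref{rem about Zepsn} applies verbatim, and then transferring the conclusion through the limiting procedure behind~\eqref{eq reg limit} (using harmonicity to see that the limiting potential is $h$). A variant that never leaves the quadratic forms is to prove the one-step relation $\pd{\net_n}[h_{|_{V_n}}]=\pd{\net_{n-1}}[h_{|_{V_{n-1}}}]$ and induct downward: on each level-$(n-1)$ cell the subnetwork of $\net_n$ is a rescaled copy of $\net_1$, electrically equivalent to a single $\Zeff{\SL}$-triangle, hence dissipates $\frac{\Re(\Zeff{\SL})}{2|\Zeff{\SL}|^2}$ times the sum of squared potential differences across the cell's three boundary vertices, and since the $3^{n-1}$ cells' boundary triangles are precisely and disjointly the deepest edges of $\net_{n-1}$, summing reproduces exactly $\pd{\net_{n-1}}[h_{|_{V_{n-1}}}]$.
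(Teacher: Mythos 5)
Your proposal is correct and rests on exactly the ingredient the paper uses: by \cite[Theorem 3.1]{A++16} each $n$-cell of $\net_{\SL}$ is electrically equivalent to a $\Zeff{\SL}$-triangle, i.e.\ to the corresponding cell of $\net_n$, while all coarser edges carry the same impedances, so the harmonic potential restricted to $V_n$ and its dissipation are unaffected by passing to $\net_n$. The paper compresses this into a two-line equivalence argument, whereas you make explicit the steps it leaves implicit (the $\varepsilon\to 0_+$ identification $\pd{\net_{\SL}}[h]=\pd{\net_0}[h_{|_{V_0}}]$ via~\eqref{eq reg limit} and the filter condition, and the conservation-of-average-power/terminal-equivalence step), so it is essentially the same proof written out in more detail.
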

Here, $\pd{\SL}[h]$ acts in place of what one would in general define as ``trace'' of the ``limit'' power dissipation.
\begin{proof}
We prove the equivalent statement that for each $n\geq 0$, measuring potential across vertices in 
the F-S ladder network is equivalent to measuring them in the network $\net_n$.

By definition of $\net_n$, the impedance between edges $\{x,y\}\in E_{n-1}$ with at least one vertex in $V_{n-1}$ is the same in $\net_{\SL}$ and $\net_n$. Thus, it suffices to show that triangular cells of level $n$ ($n$-cells) are electrically equivalent in both networks. 

On the one hand, notice that an $n$-cell of the network $\net_n$ is a triangular network with impedances $\Zeff{\SL}$. On the other hand, an $n$-cell of $\net_{\SL}$ is itself a F-S ladder, which is  electrically equivalent to a triangular network with impedances $\Zeff{\SL}$ as well.
\end{proof}
\begin{remark}\label{rem about Zen}
From the definition of~$\net_n$ it follows that for any $v\in\ell(V_*)$ and $n\geq 1$,
\[
\pd{\net_n}[v_{|_{V_n}}]=\frac{1}{2}\sum_{\substack{\{x,y\}\in E_n\setminus E_{n-1}\\x,y\in V_n\setminus V_{n-1}}}\frac{\Re(\Zeff{\SL})}{|\Zeff{\SL}|^2}|v(x)-v(y)|^2,
\]
which is a multiple of the energy of the $n$-th graph approximation of the Sierpinski gasket.
\end{remark}

\section{Geometric projection of the infinite graph $\G_\infty$}\label{section geometry SL}

The power dissipation associated with the F-S ladder has so far been defined for potentials on $V_*$. The present section investigates some geometric properties of the subset of $\mbbR^2$ that corresponds to the graphical representation of $\G_\infty$. Among them, this set turns out to be a fractal quantum graph whose set of vertices is a fractal dust larger than $V_*$. 

\medskip

Let $S=\{1,2,3\}$ and let $\{p_1,p_2, p_3\}\in\mbbR^2$ denote the set of vertices of an equilateral triangle of side length $1$ with baricentre $p_0$.

\begin{definition}\label{def G_i}
For each $i\in S$, define the map $G_i\colon\mbbR^2\to\mbbR^2$ as
\[
G_i(x):=F_i\circ G_0(x),
\]
where $F_i,G_0\colon\mbbR^2\to\mbbR^2$ are given by
\[
F_i(x)=\frac{1}{2}(x-G_0(p_i))+G_0(p_i),
\]
respectively
\[
G_0(x)=\alpha(x-p_0)+p_0,
\]
with $\alpha\in(0,1)$. Moreover, set $p_{ij}=G_i(p_j)$ for each $i,j\in S$.
\end{definition}

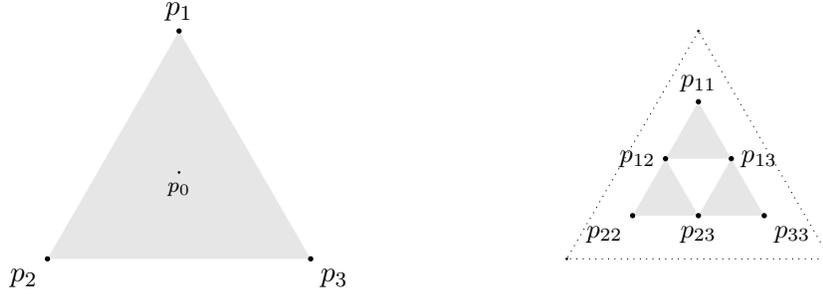
\begin{figure}[H]
\centering
\begin{tikzpicture}[scale=0.5]
\coordinate[label=below left:{$p_2$}] (p_1) at (0,0);
\coordinate[label=above:{$p_1$}] (p_2) at (3.5,6.0621);
\coordinate[label=below right:{$p_3$}] (p_3) at (7,0);
\coordinate[label=below:{$\scriptstyle{p_0}$}] (p_0) at (3.5,2.3);
\fill[very nearly transparent] (p_1) -- (p_2) -- (p_3) -- cycle;
\fill (p_0) circle (1pt);
\fill (p_1) circle (2pt);
\fill (p_2) circle (2pt);
\fill (p_3) circle (2pt);
\end{tikzpicture}
\hspace*{.75in}
\begin{tikzpicture}[scale=0.5]
\coordinate (p_0) at (3.5,2.3);
\coordinate[label=above:\textcolor{white}{$p_1$}] (p_1) at (3.5,6.0621);
\coordinate[label=below left:\textcolor{white}{$p_2$}] (p_2) at (0,0);
\coordinate[label=below right:\textcolor{white}{$p_3$}] (p_3) at (7,0);

\coordinate[label=above:{\small{$p_{11}$}}] (p_11) at (3.5/2+3.5/2,6.0621/2+2.3/2);
\coordinate[label=left:{\small{$p_{12}$}}] (p_12) at (3.5/4+3.5/2,6.0621/4+2.3/2);
\coordinate[label=right:{\small{$p_{13}$}}] (p_13) at (7/4+3.5/4+3.5/2,6.0621/4+2.3/2);
\coordinate[label=below left:{\small{$p_{22}$}}] (p_22) at (3.5/2,2.3/2);
\coordinate[label=below:{\small{$p_{23}$}}] (p_23) at (7/4+3.5/2,2.3/2);
\coordinate[label=below right:{\small{$p_{33}$}}] (p_33) at (7/2+3.5/2,2.3/2);
%
\draw[thin,dotted] (p_1) -- (p_2) -- (p_3) -- cycle;
\fill[very nearly transparent] (p_11) -- (p_12) -- (p_13) -- cycle;
\fill[very nearly transparent] (p_12) -- (p_22) -- (p_23) -- cycle;
\fill[very nearly transparent] (p_13) -- (p_23) -- (p_33) -- cycle;
\fill (p_1) circle (1pt);
\fill (p_2) circle (1pt);
\fill (p_3) circle (1pt);
\fill (p_11) circle (2pt);
\fill (p_12) circle (2pt);
\fill (p_13) circle (2pt);
\fill (p_22) circle (2pt);
\fill (p_23) circle (2pt);
\fill (p_33) circle (2pt);
\end{tikzpicture}
\caption{Shadowed: initial equilateral triangle and its image under $G_1,G_2$ and $G_3$.}
\end{figure}

Notice that $p_i$ is \textit{not} the fixed point of $G_i$ for any $i\in S$, and therefore $p_i\neq p_{ii}$ for any $i\in S$. On the other hand, since $p_{ij}=p_{ji}$ for all $i\neq j$, we will restrict ourselves to writing $p_{ij}$ only for $i\leq j$. Although the mappings $G_i$ actually depend on $\alpha$, we will see in Proposition~\ref{prop Qalpha homeo} that all lead to topologically equivalent sets, which eventually makes the parameter $\alpha$ irrelevant.

\begin{definition}
Let $W_0 = \{\emptyset\}$ and define for $n \geq 1$
\[
W_n = \{w~|~w = w_1\ldots{w_n}, w_i \in S,~i = 1, \ldots, n\}.
\]
Moreover, let $W_* = \cup_{n \geq 0} W_n$ and for any $w = w_1\ldots{w_n} \in W_*$ define $G_w\colon\mbbR^2\to\mbbR^2$ by
\[
G_w = G_{w_1}\circ G_{w_2}\circ\cdots\circ G_{w_n},
\]
setting $G_\emptyset$ to be the identity on $\mbbR^2$. Finally, define $\widetilde{V}_0 = \{p_1, p_2, p_3\}$ and
\begin{equation}\label{eq def tilde V}
\widetilde{V}_n = \bigcup_{w \in W_n} G_w(\tilde{V}_0)
\end{equation}
for $n \ge 1$, as well as $\widetilde{V}_*=\cup_{n\geq 0}\widetilde{V}_n$.
\end{definition}
For each $i\in S$ we will denote by $e_{ii}$ the line segment joining $p_i$ and $p_{ii}$, and by $e_{ij}$ the line segment joining $p_{i}$ and $p_{j}$, with $i<j$, see Figure~\ref{fig segments}. Moreover, we define $B:= \{(i,j)~|~i\leq j\}$ and write $e_{ij}^w = G_w(e_{ij})$ for any $(w,(i,j))\in W_*\times B$.
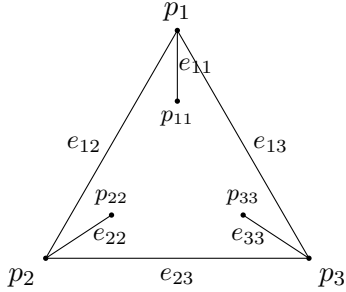
\begin{figure}[H]
\centering
\begin{tikzpicture}[scale=0.5]
\coordinate (p_0) at (3.5,2.3);
\coordinate[label=above:{$p_1$}] (p_1) at (3.5,6.0621);
\coordinate[label=below left:{$p_2$}] (p_2) at (0,0);
\coordinate[label=below right:{$p_3$}] (p_3) at (7,0);

\coordinate[label=below:{\footnotesize{$p_{11}$}}]  (p_11) at (3.5/2+3.5/2,6.0621/2+2.3/2);
\coordinate (p_12) at (3.5/4+3.5/2,6.0621/4+2.3/2);
\coordinate (p_13) at (7/4+3.5/4+3.5/2,6.0621/4+2.3/2);
\coordinate[label=above:{\footnotesize{$p_{22}$}}] (p_22) at (3.5/2,2.3/2);
\coordinate (p_23) at (7/4+3.5/2,2.3/2);
\coordinate[label=above:{\footnotesize{$p_{33}$}}] (p_33) at (7/2+3.5/2,2.3/2);
%
\draw (p_1) -- (p_2) node[midway, left] {\small{$e_{12}$}} -- (p_3) node[midway,below] {\small{$e_{23}$}} -- (p_1) node[midway, right] {\small{$e_{13}$}};
\draw (p_1) -- (p_11) node[midway, right] {\small{$\!\!e_{11}$}};
\draw (p_2) -- (p_22) node[midway, right] {\small{$\,e_{22}$}};
\draw (p_3) -- (p_33) node[midway, left] {\small{$e_{33}$}};
\fill (p_1) circle (2pt);
\fill (p_2) circle (2pt);
\fill (p_3) circle (2pt);
\fill (p_11) circle (2pt);
\fill (p_22) circle (2pt);
\fill (p_33) circle (2pt);
\end{tikzpicture}
\caption{Line segments $e_{ij}$ with $(i,j)\in B$.}
\label{fig segments}
\end{figure}

\begin{proposition}\label{prop existence Qalpha}
For any $\alpha\in (0,1)$ there exists a unique compact set $Q_{\alpha}\subseteq\mbbR^2$ such that
\[
Q_{\alpha} = \bigcup_{i\in S}G_i(Q_{\alpha}) \cup \bigcup_{(i,j)\in B} e_{ij}.
\]
Furthermore,
\begin{equation}\label{eq decomp Qalpha}
Q_{\alpha} = C_{\alpha} \cup \bigcup_{(w,(i,j)) \in W_*\times B}e_{ij}^w,
\end{equation}
where $C_{\alpha}$ is the self-similar set associated with $\{G_1, G_2, G_3\}$, i.e. $C_{\alpha}$ is the unique nonempty compact set satisfying
\begin{equation}\label{eq Kalpha}
C_\alpha = \bigcup_{i\in S}G_i(C_\alpha).
\end{equation}
\end{proposition}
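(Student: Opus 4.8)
The plan is to read the defining relation for $Q_\alpha$ as the fixed-point equation of a Hutchinson operator \emph{with a condensation set}, invoke the contraction mapping principle, and then unfold the recursion to obtain~\eqref{eq decomp Qalpha}.

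First I would note that each $G_i=F_i\circ G_0$ is a contraction of $\mbbR^2$ of ratio $\alpha/2<1$, since $G_0$ scales distances by $\alpha$ and $F_i$ by $1/2$. Set $L:=\bigcup_{(i,j)\in B}e_{ij}$, a nonempty compact set, and on the complete metric space $(\mathcal{K},d_H)$ of nonempty compact subsets of $\mbbR^2$ with the Hausdorff metric define
\[
T(A)=\bigcup_{i\in S}G_i(A)\cup L .
\]
Since $A\mapsto\bigcup_{i\in S}G_i(A)$ is a contraction of ratio $\alpha/2$ on $(\mathcal{K},d_H)$ and $d_H(A\cup L,A'\cup L)\le d_H(A,A')$, the map $T$ is itself a contraction of ratio $\alpha/2$, so Banach's fixed point theorem yields a unique compact $Q_\alpha$ with $T(Q_\alpha)=Q_\alpha$; this is precisely the asserted self-referential identity. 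Existence and uniqueness of the self-similar set $C_\alpha$ in~\eqref{eq Kalpha} is the classical Hutchinson theorem for $\{G_1,G_2,G_3\}$.

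Next I would iterate the identity $Q_\alpha=\bigcup_{i\in S}G_i(Q_\alpha)\cup L$. Using $G_w\circ G_j=G_{wj}$ and substituting repeatedly, a straightforward induction on $n$ gives
\[
Q_\alpha=\bigcup_{w\in W_n}G_w(Q_\alpha)\ \cup\ \bigcup_{k=0}^{n-1}\ \bigcup_{w\in W_k}G_w(L),\qquad n\ge 1,
\]
with $G_\emptyset=\mathrm{id}$. Here $\bigcup_{w\in W_n}G_w(\cdot)$ is the $n$-th iterate of the contraction $A\mapsto\bigcup_{i\in S}G_i(A)$, whose fixed point is $C_\alpha$, so $\bigcup_{w\in W_n}G_w(Q_\alpha)\to C_\alpha$ in $(\mathcal{K},d_H)$, while $\diam G_w(Q_\alpha)=(\alpha/2)^{|w|}\diam Q_\alpha\to 0$. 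From the displayed identity, every $G_w(Q_\alpha)$ and every $G_w(L)$ lies in $Q_\alpha$, and $C_\alpha\subseteq Q_\alpha$ because $Q_\alpha$ is closed; hence $C_\alpha\cup\bigcup_{w\in W_*}G_w(L)\subseteq Q_\alpha$. For the reverse inclusion, take $x\in Q_\alpha$: by the displayed identity either $x\in G_w(L)$ for some $w\in W_*$, or $x\in\bigcup_{w\in W_n}G_w(Q_\alpha)$ for every $n$, in which case $d(x,C_\alpha)\le d_H\big(\bigcup_{w\in W_n}G_w(Q_\alpha),C_\alpha\big)\to 0$ forces $x\in C_\alpha$. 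Thus $Q_\alpha=C_\alpha\cup\bigcup_{w\in W_*}G_w(L)$, and since $G_w(L)=\bigcup_{(i,j)\in B}e_{ij}^w$ this is exactly~\eqref{eq decomp Qalpha}.

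The only step that is not a routine application of the contraction principle is the passage to the limit in the finite unfolding: one must check that the countable union $\bigcup_{w\in W_*}G_w(L)$, which is not obviously closed, together with $C_\alpha$ recovers $Q_\alpha$ on the nose — neither losing points of $Q_\alpha$ nor acquiring limit points outside $C_\alpha$. The dichotomy above, powered by $\diam G_w(Q_\alpha)\to 0$, is what resolves this, and I expect it to be the main (albeit mild) obstacle.
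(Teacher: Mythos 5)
Your proof is correct and takes essentially the same route as the paper: both view $Q_\alpha$ as the attractor of the inhomogeneous (condensation) iterated function system $A\mapsto\bigcup_{i\in S}G_i(A)\cup\bigcup_{(i,j)\in B}e_{ij}$, the only difference being that the paper obtains existence, uniqueness and the decomposition by citing \cite[Theorem 1]{Hat85}, while you reprove that result from scratch via Banach's fixed point theorem and the unfolding/dichotomy argument. Your self-contained derivation of~\eqref{eq decomp Qalpha} is sound and simply fills in the details the paper delegates to the citation.
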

\begin{proof}
The mapping $H(K):=\bigcup_{i=1}^3 G_i(K)$ is a $\frac{\alpha}{2}$-contraction. By~\cite[Theorem 1]{Hat85}, the inhomogeneous equation 
$x=H(x)\cup \cup_{(i,j)\in B} e_{ij}$ has a unique solution $Q_\alpha$ in the space of compact subsets of $\mbbR^2$ that equals the closure of 
$\bigcup_{(w,(i,j)) \in W_*\times B}e_{ij}^w$ and in particular~\eqref{eq Kalpha} holds.
\end{proof}
Notice that 
$C_\alpha$ is a fractal (Cantor) dust for any $\alpha\in(0,1)$. In view of the next proposition, we will refer to any of $Q_\alpha$ and $C_\alpha$ with $\alpha\in(0,1)$ simply by $Q_\infty$ and $C_\infty$. Although the notation might seem misleading at first sight, we write $Q_\infty$ to underline its relation with $\G_\infty$, not meaning $\alpha=\infty$.
\begin{proposition}\label{prop Qalpha homeo}
The sets $Q_\alpha$ are pairwise homeomorphic for any $\alpha\in (0,1)$.
\end{proposition}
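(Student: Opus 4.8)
The plan is to build the homeomorphism between $Q_\alpha$ and $Q_\beta$ as a limit of a compatible family of maps defined cell-by-cell, exploiting the common combinatorial address structure of both sets. By Proposition~\ref{prop existence Qalpha}, every point of $Q_\alpha$ lies either in the dust $C_\alpha$ or on one of the segments $e_{ij}^w$, and the same holds for $Q_\beta$; moreover the index sets $W_*\times B$ are identical. First I would record that for each $\alpha$ the ``cells'' $G_w(Q_\alpha)$, $w\in W_*$, form a nested family whose diameters shrink geometrically (diameter of a word-$n$ cell is $(\alpha/2)^n\cdot\diam Q_\alpha$), so that $\bigcap_{n} G_{w_1\cdots w_n}(Q_\alpha)$ is a single point for every infinite word $w_1w_2\cdots\in S^{\mathbb N}$; this is exactly the standard coding of the self-similar dust $C_\alpha$ and gives a canonical bijection $\kappa_{\alpha\beta}\colon C_\alpha\to C_\beta$ via the shared address space $S^{\mathbb N}$. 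On the attached segments, $e_{ij}^w$ for $Q_\alpha$ and for $Q_\beta$ are both straight line segments whose endpoints are vertices in $\widetilde V_*$, and $\widetilde V_*$ is itself coded the same way for both parameters; so I would map $e_{ij}^w$ (for $\alpha$) affinely onto $e_{ij}^w$ (for $\beta$), matching endpoints according to their addresses.

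The key steps, in order: (1) Fix the address coding: show the map $S^{\mathbb N}\to C_\alpha$, $w\mapsto\bigcap_n G_{w|_n}(Q_\alpha)$, is a well-defined continuous bijection, hence (compactness) a homeomorphism onto $C_\alpha$ — here one needs that the $G_i(C_\alpha)$ are disjoint, which follows because $C_\alpha$ is a Cantor dust (the open set condition holds with strict separation, as remarked after Proposition~\ref{prop existence Qalpha}). (2) Define $\Phi_{\alpha\beta}\colon Q_\alpha\to Q_\beta$ by: on $C_\alpha$, $\Phi_{\alpha\beta}=\kappa_{\alpha\beta}$; on each segment $e_{ij}^w$, $\Phi_{\alpha\beta}$ is the unique affine bijection onto $e_{ij}^w\subseteq Q_\beta$ sending $G_w(p_i)\mapsto G_w(p_i)$ (address-wise) and $G_w(p_j)\mapsto G_w(p_j)$, noting that $p_i$ and $p_{ii}$ both have eventually-constant addresses so they lie in $C_\alpha$ and the two definitions agree at the overlap. (3) Check $\Phi_{\alpha\beta}$ is well defined where segments meet each other or meet the dust — this reduces to checking that two segments $e_{ij}^w$, $e_{i'j'}^{w'}$ intersect (in $Q_\alpha$) exactly when the corresponding ones do in $Q_\beta$, and at corresponding points; since all intersection points are forced to be common endpoints in $\widetilde V_*$ and these are coded identically, consistency is automatic. (4) Continuity: away from $C_\alpha$ the map is locally affine hence continuous; at a point $x\in C_\alpha$ with address $w$, given $\varepsilon$ choose $n$ with $(\beta/2)^n\diam Q_\beta<\varepsilon$; then $\Phi_{\alpha\beta}$ maps the cell $G_{w|_n}(Q_\alpha)$ into $G_{w|_n}(Q_\beta)$, which has diameter $<\varepsilon$ and is a neighborhood of $\Phi_{\alpha\beta}(x)$ relative to $Q_\beta$ — and the preimage of that cell contains a relative neighborhood of $x$ because the cells of level $n$ are relatively open-and-closed up to their finitely many boundary vertices. (5) Conclude: $\Phi_{\alpha\beta}$ is a continuous bijection from the compact Hausdorff space $Q_\alpha$ to the Hausdorff space $Q_\beta$, hence a homeomorphism; symmetry and the obvious relation $\Phi_{\beta\gamma}\circ\Phi_{\alpha\beta}=\Phi_{\alpha\gamma}$ give ``pairwise homeomorphic''.

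The main obstacle I anticipate is step (4), the continuity at points of the dust $C_\alpha$ — more precisely, verifying that the level-$n$ cells $G_{w|_n}(Q_\alpha)$ genuinely furnish a neighborhood basis of $x$ in $Q_\alpha$. This is subtler than in the pure-dust case because the attached segments $e_{ij}^v$ with $v$ a short word (e.g. $v=\emptyset$) are long and could, a priori, accumulate near $x$ from ``outside'' its cell. I would handle this by a separation estimate: a point on $e_{ij}^v$ that is not an endpoint has distance bounded below (in terms of $n$ and the fixed geometry) from any level-$n$ cell not containing that segment, because the segments only touch cells at the coded vertices $\widetilde V_*$; packaging this into an explicit lower bound is the one genuinely computational part of the argument, and it is where the hypothesis $\alpha\in(0,1)$ (ensuring the cells are strictly nested and separated) is used. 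Everything else is bookkeeping with the address space.
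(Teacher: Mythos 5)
Your construction is essentially the paper's own proof: the paper defines $\varphi_{\alpha_1,\alpha_2}=\iota^{\alpha_2}\circ(\iota^{\alpha_1})^{-1}$ on the dust via the canonical coding and extends it to each segment by $G^{\alpha_2}_w\circ(G^{\alpha_1}_w)^{-1}$, which is exactly your address-preserving map on $C_\alpha$ together with your endpoint-matching affine maps on the $e^w_{ij}$; the continuity and consistency checks you spell out are the details the paper leaves implicit, and they go through. One minor correction that does not affect the argument: the segment endpoints $G_w(p_i)$ lie in $\widetilde{V}_*$, which is disjoint from $C_\alpha$ (cf.\ the decomposition $K_\infty=C_\infty\cup\widetilde{V}_*$), so the consistency issue in your step (2) is only between segments sharing a coded vertex, not between a segment and the dust.
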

\begin{proof}
Let $G_w^{\alpha}$ and $e^{\alpha,w}_{ij}$ denote $G_w$ and $e^w_{ij}$ respectively. Moreover, let $\iota^\alpha$ denote the canonical coding mapping that identifies $S^{\mbbN}$ with $C_\alpha$, which is given by $\iota^\alpha(w_1w_2\ldots)=\bigcap_{k\geq 1}G_{w_1\ldots w_k}(C_\alpha)$. For any $\alpha_1,\alpha_2\in (0,1)$, define $\varphi_{\alpha_1,\alpha_2}=\iota^{\alpha_2}{\circ}(\iota^{\alpha_1})^{-1}\colon C_{\alpha_1}\to C_{\alpha_2}$. Extending this mapping onto each $e^{\alpha_1,w}_{ij}$ by $\varphi_{\alpha_1,\alpha_2}|_{e^{\alpha_1,w}_{ij}}=G_w^{\alpha_2}{\circ}(G_w^{\alpha_1})^{-1}|_{e^{\alpha_1,w}_{ij}}$ for any $(w,(i,j))\in W_*\times B$ yields the desired homeomorphism $\varphi_{\alpha_1,\alpha_2}\colon Q_{\alpha_1}\to Q_{\alpha_2}$.
\end{proof}

The fractals $Q_\infty$, $C_\infty$, and the infinite graph $\G_\infty$ are related through a projection mapping $\pi\colon\G_\infty\to\mbbR^2$ that in particular maps each approximating graph $\G_n$ to its graphical representation in $\mbbR^2$ displayed in Figure~\ref{fig graphs G_n}. Based on this construction, each vertex $x\in V_n\setminus V_{n-1}$ will be associated with a word of length $n\geq 1$, $w(x)\in W_n$, so that $\pi(x)=G_{w(x)_1\ldots w(x)_{n-1}}(p_{w(x)_n})$. In view of~\eqref{eq def tilde V}, any accumulation point, which corresponds to a vertex not captured by $V_*$, will be associated with an infinite word $w(x)\in S^{\mbbN}$ provided by the canonical coding mapping associated with $C_\infty$, so that $\pi(x)=\cap_{n\geq 1}G_{w(x)_1\ldots w(x)_n}(C_\infty)$.

\medskip

\begin{definition}\label{def proj}
Let $V_\infty$ and $E_\infty$ the set of vertices, respectively edges, of $\G_\infty$. For any fixed choice of the values of $\proj|_{V_0}$ so that $\proj(V_0)=\widetilde{V}_0$, the projection mapping $\pi\colon\G_\infty\to\mbbR^2$ is defined as
\[
\proj(x)=\left\{\begin{array}{ll}
G_{w(x)}(p_{w(x)_n})&\text{if }x\in V_n\setminus V_{n-1},\\
\bigcap_{k\geq 1}G_{w_1(x)\ldots w_k(x)}(C_\infty)&\text{if }x\in V_\infty\setminus V_*,
\end{array}\right.
\]
and
\begin{align*}
\pi(\{x,y\})&=\{\proj(x)(1-t)+t\proj(y),~t\in (0,1)\},\qquad  \{x,y\}\in E_\infty.
\end{align*} 
\end{definition}
An each level $n\geq 1$, $\proj(\G_n)$ is isomorphic to the so-called \textit{cable system} associated with the graph $\G_n$, see~\cite{BB04}.

\begin{remark}\label{rem Qalpha}
The sequence $\{\pi(\G_n)\}_{n\geq 0}$ is monotonically increasing and
\[
Q_\infty=\overline{\bigcup_{n\geq 0}\bigcup_{\{x,y\}\in E_n}\proj(\{x,y\})}^{\text{Eucl}}.
\]
\end{remark}

This last observation leads to the fact that $Q_\infty$ is a \textit{fractal quantum graph}, a concept introduced in~\cite{AKT16}, whose definition we recall below.
\begin{definition}
A fractal quantum graph with length system $\{(\phi_k,\ell_k)\}_{k\geq 1}$ is a se\-parable compact connected and locally connected metric space $(X,d)$ that satisfies the following two conditions:
\begin{itemize}[leftmargin=.25in]
\item[(i)] For each $k\geq 1$, $\ell_k>0$ and $\phi_k\colon [0,\ell_k]\to X$ is an isometry such that $\phi_k([0,\ell_k])\cong[0,\ell_k]$ and
\[
\phi_k((0,\ell_k))\,\cap\,\phi_j((0,\ell_j))=\emptyset\qquad \forall~k\neq j.
\]
\item[(ii)] The set
\[
K:=X\setminus\bigcup_{k\geq 1}\phi_k((0,\ell_k))
\]
is totally disconnected.
\end{itemize}
\end{definition}
In view of the definition of $\proj$, the totally disconnected set $K_\infty$ associated with $Q_\infty$ corresponds to the closure of the set of nodes $V_*$. Potentials in the F-S ladder are thus defined on $K_\infty$ and we will see in the next section how to extend our previous definition of power dissipation to a special class of them.
\begin{proposition}\label{lemma Q is FQG}
$Q_\infty$ is a fractal quantum graph.
\end{proposition}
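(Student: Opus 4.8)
The plan is to equip $Q_\infty$ with the Euclidean metric inherited from $\mbbR^2$ (the induced geodesic metric would serve equally well) and to verify the two axioms of a fractal quantum graph directly. Compactness of $Q_\infty$ is Proposition~\ref{prop existence Qalpha}, and separability is then automatic. By Remark~\ref{rem Qalpha}, $Q_\infty$ is the closure of the increasing union $\bigcup_{n\geq0}\proj(\G_n)$; each $\proj(\G_n)$ is the cable system of the finite connected graph $\G_n$, hence connected, and all of them contain $\widetilde V_0$, so $Q_\infty$ is connected. For local connectedness I would iterate the defining equation of Proposition~\ref{prop existence Qalpha} (equivalently, group \eqref{eq decomp Qalpha} by word length) to obtain, for every $n\geq1$,
\[
Q_\infty=\bigcup_{w\in W_n}G_w(Q_\infty)\ \cup\ \bigcup_{|v|<n}\ \bigcup_{(i,j)\in B}e_{ij}^v ,
\]
where each $G_w(Q_\infty)$ is a similar copy of $Q_\infty$ of diameter $(\alpha/2)^n\diam Q_\infty$, hence connected, and the second union is a \emph{finite} family of line segments. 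Given $\varepsilon>0$, choosing $n$ with $(\alpha/2)^n\diam Q_\infty<\varepsilon$ and subdividing each of those finitely many segments into pieces of diameter $<\varepsilon$ exhibits $Q_\infty$ as a finite union of connected sets of diameter $<\varepsilon$; by the classical characterization of Peano continua, the compact connected metric space $Q_\infty$ is therefore locally connected.

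For axiom~(i) I would take $\{\phi_k\}_{k\geq1}$ to be an enumeration of the countable family $\{e_{ij}^w\mid(w,(i,j))\in W_*\times B\}$: each $e_{ij}^w$ is a straight segment, so with the subspace metric it is isometric, via its affine parametrization, to $[0,\ell_{w,ij}]$ with $\ell_{w,ij}=(\alpha/2)^{|w|}|e_{ij}|>0$, and $\phi_k([0,\ell_k])\cong[0,\ell_k]$ is immediate. What remains—and this is the heart of the matter—is the \emph{net structure} of the construction: the open segments $e_{ij}^w$ must be pairwise disjoint and disjoint from $C_\infty$, equivalently, two cells $G_w(Q_\infty)$, $G_{w'}(Q_\infty)$ of the same level must meet in only finitely many points of $\widetilde V_*$. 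I would prove this by reducing, via the self-similarity and the level-$n$ decomposition above, to the two base facts: (a) the interiors of the six segments $e_{ij}$, $(i,j)\in B$, are pairwise disjoint and avoid $C_\infty$ (clear from Figure~\ref{fig segments}: the three sides lie on $\partial(p_1p_2p_3)$, while the three spokes run radially from a vertex of $p_1p_2p_3$ to a vertex of the bounding triangle of one of the cells $G_i(Q_\infty)$), and (b) the cells $G_i(C_\infty)$ are pairwise disjoint, which is the Cantor dust property of $C_\infty$. Combined with \eqref{eq decomp Qalpha}, this also identifies $K:=Q_\infty\setminus\bigcup_k\phi_k((0,\ell_k))$ with $C_\infty\cup\widetilde V_*=\overline{V_*}$, as announced before the statement.

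For axiom~(ii), that $K$ is totally disconnected, I would first show that every point of $\widetilde V_*$ is isolated in $K$. Write $x\in\widetilde V_m$ as $x=G_w(p_i)$ with $|w|=m$. Using that no base vertex $p_j$ lies in any of the cells $G_1(Q_\infty),G_2(Q_\infty),G_3(Q_\infty)$—a consequence of $p_j$ not being the fixed point of $G_j$ together with the elementary geometry of $G_j=F_j\circ G_0$—and the net structure at level $m$, a short analysis shows that $x$ lies in no cell $G_v(Q_\infty)$ with $|v|=m+1$; since $\bigcup_{|v|=n}G_v(Q_\infty)\subseteq\bigcup_{|v|=m+1}G_v(Q_\infty)$ for all $n\geq m+1$, the point $x$ has positive distance from the compact set $\bigcup_{|v|=m+1}G_v(Q_\infty)$, which contains $C_\infty$ together with all of $\widetilde V_*$ except the finitely many vertices of $\widetilde V_{\leq m}$; as $x$ also has positive distance from the finite set $\widetilde V_{\leq m}\setminus\{x\}$, it is isolated in $K$. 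Now let $A\subseteq K$ be connected with at least two points: $A$ has no isolated point, hence contains no point of $\widetilde V_*$, so $A\subseteq K\setminus\widetilde V_*\subseteq C_\infty$; but $C_\infty$ is a Cantor dust, so $A$ must be a single point, a contradiction. Therefore $K$ is totally disconnected and $Q_\infty$ is a fractal quantum graph.

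The only genuine obstacle is the net structure invoked in the second and third paragraphs—that distinct cells $G_w(Q_\infty)$ of equal level meet in only finitely many points of $\widetilde V_*$, and that the open segments avoid $C_\infty$. This is geometrically evident from Figures~\ref{fig graphs G_n}--\ref{fig segments}, but a rigorous proof requires tracking the relative positions of the triangles $G_w(\widetilde V_0)$ and the connecting spokes through the affine maps $G_i=F_i\circ G_0$; once it is established, every other step above is routine.
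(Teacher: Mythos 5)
Your proof is correct and takes essentially the same route as the paper's, which is far terser: it equips $Q_\infty$ with the Euclidean metric, takes the length system $\{(\phi_{ij}^w,\ell_{ij}^w)\}$ given by the segments $e_{ij}^w$, and identifies the complement of the open edges with $C_\infty\cup\widetilde V_*$, asserting total disconnectedness without the verifications you supply for connectedness, local connectedness, and the isolation of the vertices. The one step you defer (that the open segments are pairwise disjoint and avoid $C_\infty$) is likewise taken for granted in the paper, so your write-up is, if anything, more complete.
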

\begin{proof}
Equipped with the Euclidean metric, $Q_\infty$ is a compact, and thus locally compact metric space. The length system is given by $\{(\phi_{ij}^w,\ell_{ij}^w)\}_{(w,(i,j))\in W_*\times B}$, where $\ell_{ij}^w=\diam e_{ij}^w$ and $\phi_{ij}^w\colon[0,\ell_{ij}^w]\to e_{ij}^w$ is the curve parametrization of $e_{ij}^w$. In particular, $\phi_{ij}^w((0,\ell_{ij}^w))=\mathring{\pi}(\{x,y\}):=\pi(\{x,y\})\setminus\{\proj(x),\proj(y)\}$, where $\proj(x)=G_w(p_i)$ and $\proj(y)=G_w(p_j)$. In view of~\eqref{eq decomp Qalpha} we have that 
\[
K_\infty:=Q_\infty\setminus\bigcup_{(w,(i,j))\in W_*\times B}\phi_{ij}^w((0,\ell_{ij}^w))=Q_\infty\setminus\bigcup_{n\geq 0}\bigcup_{\{x,y\}\in E_n}\mathring{\proj}(\{x,y\})=C_\infty\cup \widetilde{V}_*
\]
is a totally disconnected set.
\end{proof}

Notice that $Q_\infty$ is also a \textit{finitely ramified fractal}~\cite{Tep08} and it can be expressed as a \textit{graph directed fractal}~\cite{HN03} as well. 
In our particular case we have focused on the fact that it is a fractal quantum graph because of the importance of the totally disconnected set $K_\infty$ in the next sections.
\section{Continuity of potentials}\label{setion cp in SL}
The projection mapping $\proj$ allows us to identify the nodes of the F-S ladder network with a subset of $\mbbR^2$. In general, this kind of identification naturally transfers the notion of power dissipation in graphs to discrete subsets of $\mbbR^2$. The harmonic functions associated with power dissipation and in particular their continuity, proved in Theorem~\ref{thm harmonic cont}, are essential to define power dissipation in $Q_\infty$ through the fractal dust $K_\infty$.

\medskip

From now on, we identify the approximating sets $\widetilde{V}_n$ in~\eqref{eq def tilde V} with the set of vertices $V_n$ via $\widetilde{V}_n=\proj(V_n)$ and hence use the notation $V_n$ for any of both. The power dissipation in $V_n$ associated with a network $\net$ (formally given by $\pd{\net}[v{\circ}\pi]$) will be denoted by $\pdf{\net}[v]$.
%
%
%
In this manner, the power dissipation associated with the F-S ladder is given by 
\begin{equation*}\label{eq def pdinWstar}
\pdf{\SL}[v]:=\lim_{\varepsilon\to 0_+}\lim_{n\to\infty}\pdf{\net_{\varepsilon,n}}[v_{|_{V_n}}],
\end{equation*}
where $\dom \pdf{\SL}:=\{v\in\ell(V_*)~|~\pdf{\SL}[v]<\infty\}$ and now $V_*$ is a subset of $\mbbR^2$.

\medskip

As already mentioned, Lemma~\ref{lemma Q is FQG} and the definition of $\proj$ allow us to identify the set of vertices $V_\infty$ with the totally disconnected set $K_\infty$ for which $V_*$ is a dense subset. By Proposition~\ref{prop existence Qalpha}, $K_\infty$ is compact with respect to the topology induced by the Euclidean metric. The aim of this section is to prove the continuity of the harmonic functions on $V_*$, so that they can be uniquely extended to continuous (harmonic) functions on $K_\infty$. 

\medskip

Recall that a function $h\in\ell(V_*)$ is said to be \textit{harmonic} if for any $\varepsilon>0$ and $n\geq 0$
\[
\pdf{\net_{\varepsilon,0}}[h_{|_{V_0}}]=\pdf{\net_{\varepsilon,n}}[h_{|_{V_n}}].
\]
Moreover, the space of harmonic functions on $V_*$, denoted by $\Hm_{\SL}(V_*)$, is $3$-dimensional, and for any $h\in\Hm_{\SL}(V_*)$ and $n\geq 1$,
\begin{equation}\label{eq PSL vs PZen}
\lim\limits_{\varepsilon\to 0_+}\pdf{\net_{\varepsilon,n}}[h_{|_{V_n}}]=\pdf{\SL}[h]=\pdf{\net_n}[h_{|_{V_n}}]
\end{equation}
c.f.  Proposition~\ref{prop harmonic in V}.

\medskip

Starting with a function $h_0\in\ell(V_0)$, harmonic functions are constructed by applying recursively the \textit{harmonic extension algorithm} provided in~\cite[Theorem 3.3]{A++16}. This result conveys an explicit expression of the $3\times 3$-matrices $A_1,A_2,A_3$, that describe the algorithm. Therefore,
\begin{equation}\label{eq harmonic matrices}
h_{|_{G_j(V_0)}}=A_j h_{|_{V_0}}. 
\end{equation}
for any $h\in\Hm_{\SL}(V_*)$ and $j=1,2,3$.

\begin{remark}\label{rem evs of Aj}
\begin{itemize}[leftmargin=.25in]
\item[(i)]The eigenvalues of $A_j$, $j=1,2,3$, can be explicitly computed with any mathematical software and equal
\[
\lambda_1=1,\quad \lambda_2=\frac{3\Zeff{\SL}}{9Z_C+5\Zeff{\SL}},\quad\lambda_3=\frac{1}{3}\lambda_2.
\]
\item[(ii)] While the eigenvector associated with $\lambda_1$ is $h_1=(1,1,1)$ in all matrices, the eigenvectors associated with $\lambda_2$ and $\lambda_3$ vary with the choice of $j$. The space of constant functions on $V_*$ is thus the $1$-dimensional subspace of $\Hm_{\SL}(V_*)$ spanned by $h_1$.

\item[(iii)] Under the filter condition~\eqref{FC}, substituting $Z_C$ and $\Zeff{\SL}$ by their actual values from~\cite[Theorem 3.1]{A++16}, one obtains 
\[
|\lambda_2|^2=\frac{9\sigma^2+(27+6CL\omega^2)^2}{2106+25\sigma^2+90\sigma+100CL\omega^2(9+2CL\omega^2)},
\]
where $\sigma=\sqrt{144 CL\omega^2-(2CL\omega^2)^2-81}\in\mbbR$. Although not directly readable from this expression, it holds that $|\lambda_2|<1$. 
\end{itemize}
\end{remark}

\begin{lemma}\label{lemma eigenvalue less than 1}
The eigenvalues of the matrices $A_1,A_2,A_3$ from the harmonic extension algorithm satisfy $|\lambda_3|<|\lambda_2|<|\lambda_1|=1$.
\end{lemma}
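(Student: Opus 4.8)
The plan is to deduce the lemma from the closed-form expressions recorded in Remark~\ref{rem evs of Aj}. Two of the three inequalities are immediate: by Remark~\ref{rem evs of Aj}(i) we have $|\lambda_1|=1$; and since the filter condition~\eqref{FC} forces $\Re(\Zeff{\SL})>0$, in particular $\Zeff{\SL}\neq 0$, so $\lambda_2\neq 0$ and the relation $\lambda_3=\tfrac{1}{3}\lambda_2$ gives $|\lambda_3|=\tfrac{1}{3}|\lambda_2|<|\lambda_2|$ at once. Hence the entire content of the lemma reduces to the single inequality $|\lambda_2|<1$.

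To prove this I would work with the formula for $|\lambda_2|^2$ from Remark~\ref{rem evs of Aj}(iii). Setting $x:=2\omega^2LC$, the filter condition~\eqref{FC} becomes $9(4-\sqrt{15})<x<9(4+\sqrt{15})$, while $\sigma=\sqrt{72x-x^2-81}$. The key observation is that $x^2-72x+81$ has roots $36\pm 9\sqrt{15}$, so the filter condition is literally the statement $\sigma^2=72x-x^2-81>0$; in particular $x>36-9\sqrt{15}>0$ and $\sigma>0$. Rewriting $27+6CL\omega^2=27+3x$ and $100CL\omega^2(9+2CL\omega^2)=50x(9+x)$, the expression of Remark~\ref{rem evs of Aj}(iii) takes the form
\[
|\lambda_2|^2=\frac{9\sigma^2+(27+3x)^2}{2106+25\sigma^2+90\sigma+450x+50x^2}=:\frac{N}{D}.
\]

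Now $D$ is $2106$ plus a sum of nonnegative terms, so $D>0$ and it suffices to show $D-N>0$. Expanding $(27+3x)^2=729+162x+9x^2$ and then using the identity $\sigma^2=72x-x^2-81$ to eliminate $\sigma^2$, I expect $D-N$ to collapse to $81+1440x+25x^2+90\sigma$, which is strictly positive since $x>0$ and $\sigma>0$; this gives $|\lambda_2|^2<1$ and hence $|\lambda_2|<1=|\lambda_1|$. The only step where care is needed is this last simplification — a routine polynomial computation, but one must substitute $\sigma^2$ rather than $\sigma$ and track the constants accurately, since it is exactly the filter-condition identity $\sigma^2=72x-x^2-81$ that cancels the negative contribution coming from $(27+3x)^2$. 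No other genuine obstacle is anticipated.
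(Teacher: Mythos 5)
Your proposal is correct, but it proves the lemma by a genuinely different route than the paper. You reduce everything to $|\lambda_2|<1$ (as the paper does, and your justification that $\lambda_2\neq 0$ via $\Re(\Zeff{\SL})>0$ under~\eqref{FC} is fine) and then verify $|\lambda_2|<1$ by brute force from the closed-form expression in Remark~\ref{rem evs of Aj}(iii): with $x=2\omega^2LC$ the filter condition~\eqref{FC} is exactly $\sigma^2=72x-x^2-81>0$, and the difference of denominator and numerator indeed collapses, after eliminating $\sigma^2$, to $25x^2+1440x+81+90\sigma>0$ (I checked the algebra; your coefficients are right, and $x>36-9\sqrt{15}>0$, $\sigma>0$ make it strictly positive). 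The paper instead argues structurally: taking the harmonic function $h$ with $h_{|_{V_0}}=h_2$, the self-similar identity $\pdf{\SL}[h]=\pdf{\net_1}[h_{|_{V_1}}]=\sum_{j}\pdf{\net_0}[A_jh_{|_{V_0}}]$ shows that $|\lambda_2|=1$ would force $\pdf{\net_0}[A_jh_2]=0$ for the other two cells, i.e.\ $A_2h_2$, $A_3h_2$ constant, which the subsequent remark rules out using the explicit form of $A_3h_2$ (it would force $\Zeff{\SL}=-\tfrac92 Z_C$, impossible since $\Re(\Zeff{\SL})>0$). The trade-off: your computation is completely explicit and actually substantiates the paper's own aside that $|\lambda_2|<1$ is ``not directly readable'' from the formula, but it leans entirely on the correctness of that closed-form expression for $|\lambda_2|^2$ (hence on the explicit values of $Z_C$ and $\Zeff{\SL}$ from the cited work); the paper's energy argument is less computational, exposes the conceptual reason (dissipation cannot concentrate in a single cell for a non-constant harmonic potential), and is the kind of argument that transfers to other self-similar networks where no closed formula for the eigenvalues is available. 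No gap in your argument; just note explicitly, when writing it up, that $D>0$ and that the strictness of~\eqref{FC} gives $\sigma>0$, both of which you already use implicitly.
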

\begin{proof}
In view of~\ref{rem evs of Aj}(i) it only remains to prove that $|\lambda_2|<1$. Let us consider for instance the matrix $A_1$ and let $h_2$ be the eigenvector associated with $\lambda_2$. Further, let $h\in\Hm_{\SL}(V_*)$ be the harmonic function with $h_{|_{V_0}}=h_2$ and denote by $D^2_0$ the matrix representation of the power dissipation $\pdf{\net_{0}}$, i.e. 
\begin{equation}\label{eq def QP}
D^2_0=\frac{\Re(\Zeff{\SL})}{2|\Zeff{\SL}|^2}\begin{pmatrix}2&-1&-1\\-1&2&-1\\-1&-1&2\end{pmatrix}.
\end{equation}
Since $h$ is harmonic, we have that
\begin{align*}
\pdf{\net_0}[A_1 h_{|_{V_0}}]&=\langle D^2_0A_1 h_2,A_1h_2\rangle=|\lambda_2|\langle D^2_0h_2,h_2\rangle\\
&=|\lambda_2|\pdf{\net_0}[h_{|_{V_0}}]=|\lambda_2|\pdf{\SL}[h].
\end{align*}
On the other hand, it follows from~\eqref{eq PSL vs PZen} and the definition of $\pdf{\net_n}$ that 
\begin{equation}\label{eq self sim eq harmonic}
\pdf{\SL}[h]=\pdf{\net_1}[h_{|_{V_1}}]=\sum_{j=1}^3\pdf{\net_{0}}[A_j h_{|_{V_0}}].
\end{equation}
Thus, if $|\lambda_2|=1$, then $\pdf{\net_{0}}[A_2 h_{|_{V_0}}]=\pdf{\net_0}[A_3 h_{|_{V_0}}]=0$ and hence $A_2h_2$ and $A_3h_2$ are constant, a contradiction. 
\end{proof}

\begin{remark}
In fact, it is possible to check directly that for instance $A_3h_2$ is non-constant because~\cite[Theorem 3.3]{A++16} provides the explicit expression of $A_3$ and $h_2$, which leads to
\[
A_3h_2=\frac{3\Zeff{\SL}}{9Z_c+5\Zeff{\SL}}\bigg(3,\frac{27Z_C+10\Zeff{\SL}}{3Z_C+2\Zeff{\SL}},\frac{18Z_C+8\Zeff{\SL}}{3Z_C+2\Zeff{\SL}}\bigg).
\]
If this were to be constant, then $27Z_C+10\Zeff{\SL}=18Z_C+8\Zeff{\SL}$, equivalently $\Zeff{\SL}=-\frac{9}{2}Z_C$. But $Z_C$ is purely imaginary, whereas $\Zeff{\SL}$ has positive real part, a contradiction.
\end{remark}

The next theorem is the main result of this section. It justifies the extension of power dissipation in the F-S ladder to potentials defined on the whole $K_\infty$.

\begin{theorem}\label{thm harmonic cont}
Harmonic functions are continuous on $V_*$, i.e. $\Hm_{\SL}(V_*)\subseteq C(V_*)$.
\end{theorem}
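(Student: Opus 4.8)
The plan is to show that every harmonic function $h\in\Hm_{\SL}(V_*)$ extends to a continuous function on the compact totally disconnected set $K_\infty$, by exploiting the self-similar structure encoded in the harmonic extension matrices $A_1,A_2,A_3$ together with the spectral gap from Lemma~\ref{lemma eigenvalue less than 1}. Concretely, a point of $V_*$ lies in the cell $G_w(V_0)$ for some word $w\in W_*$, and an accumulation point of $V_*$ is coded by an infinite word $w\in S^{\mbbN}$. Continuity on $V_*$ is equivalent to the following \emph{oscillation decay}: for any $h\in\Hm_{\SL}(V_*)$ there is a constant $c<1$ and $M>0$ with
\[
\osc_{G_w(V_0)}(h):=\max_{x,y\in G_w(V_0)}|h(x)-h(y)|\le M\,c^{|w|}
\]
for all $w\in W_*$. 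Indeed, since $\diam G_w(e_{ij})=(\alpha/2)^{|w|}\diam e_{ij}\to 0$ and the cells $G_w(V_0)$ at level $n$ cover $V_*$ with overlaps only along cell boundaries, such a geometric decay of oscillation forces $h$ to be uniformly continuous on $V_*$ and hence to extend continuously to $K_\infty$.

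First I would reduce to oscillation of the \emph{non-constant part} of $h$. By Remark~\ref{rem evs of Aj}(ii) the constant functions form the eigenline of $\lambda_1=1$ spanned by $h_1=(1,1,1)$, and subtracting a constant changes neither oscillation nor harmonicity, so it suffices to treat $h$ whose restriction $h_{|_{V_0}}$ lies in the complementary plane. The key point is then an eigenvalue estimate: because $h_{|_{G_j(V_0)}}=A_j h_{|_{V_0}}$ by~\eqref{eq harmonic matrices}, iterating gives $h_{|_{G_w(V_0)}}=A_{w_1}A_{w_2}\cdots A_{w_n}h_{|_{V_0}}$, and the oscillation over $G_w(V_0)$ is (up to a fixed norm-equivalence constant on the $2$-dimensional non-constant subspace) comparable to the operator norm of the product $A_{w_1}\cdots A_{w_n}$ restricted to that subspace. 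Each $A_j$ preserves the splitting $\mathbb{C}^3=\langle h_1\rangle\oplus W$, where $W=\{u:u(1)+u(2)+u(3)=0\}$ is the fixed non-constant plane common to all three matrices (this is forced by the common eigenvalue structure: $A_j$ has eigenvalues $1,\lambda_2,\lambda_3$ and the $\lambda_1$-eigenvector is always $h_1$, so the sum of the other two eigenspaces is a $2$-plane, and one checks it is the same plane $W$ for all $j$ — this should follow directly from the explicit form of $A_j$ in~\cite[Theorem 3.3]{A++16}, or from the observation that $D_0^2$ annihilates exactly $\langle h_1\rangle$ and the $A_j$ are symmetric with respect to the form-induced structure). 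On $W$, each $A_j$ has eigenvalues $\lambda_2,\lambda_3$ with $|\lambda_3|<|\lambda_2|<1$, so $\|A_j|_W\|\le\kappa$ for some explicit $\kappa$; if the $A_j|_W$ are simultaneously diagonalizable the bound $\kappa=|\lambda_2|<1$ is immediate, and in any case — since the eigenvalues strictly dominated by $1$ and there are only three matrices — a standard argument (passing to a suitable norm, or using submultiplicativity with a uniform bound on finite products) gives $\|A_{w_1}\cdots A_{w_n}|_W\|\le M c^n$ for some $c<1$.

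Putting the pieces together: fix $h\in\Hm_{\SL}(V_*)$, write $h=h^{\mathrm{const}}+h^\perp$ with $h^\perp_{|_{V_0}}\in W$; then for $x,y\in G_w(V_0)$ with $|w|=n$,
\[
|h(x)-h(y)|=|h^\perp(x)-h^\perp(y)|\le C_0\bigl\|A_{w_1}\cdots A_{w_n}h^\perp_{|_{V_0}}\bigr\|\le C_0 M\,c^{\,n}\,\|h^\perp_{|_{V_0}}\|,
\]
where $C_0$ is the norm-equivalence constant comparing the oscillation seminorm on $V_0$ with the Euclidean norm on $W$. Now given $\delta>0$, choose $n$ with $C_0 M c^n\|h^\perp_{|_{V_0}}\|<\delta$; since $\diam G_w(V_0)\le (\alpha/2)^n\diam V_0\to 0$, there is $\rho>0$ so that any two points of $V_*$ within Euclidean distance $\rho$ either lie in a common level-$n$ cell $G_w(V_0)$ or lie in two cells sharing a boundary vertex, and in the latter case one chains through at most finitely many (in fact at most two adjacent) such cells — because $V_*$ at level $n$ is exactly $\bigcup_{|w|=n}G_w(V_0)$ and adjacent cells meet at the shared vertices $p_{ij}^w$. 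Hence $|h(x)-h(y)|< 2\delta$, proving uniform continuity of $h$ on $V_*$. Finally, since $V_*$ is dense in the compact metric space $K_\infty$ (Proposition~\ref{prop existence Qalpha} and the description of $K_\infty$ in Lemma~\ref{lemma Q is FQG}), the uniformly continuous $h$ extends uniquely to a continuous function on $K_\infty$, which one still calls harmonic; this is precisely the assertion $\Hm_{\SL}(V_*)\subseteq C(V_*)$.

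The main obstacle I anticipate is the bookkeeping around the common invariant plane $W$ and the transition between cells: one must be careful that the three matrices $A_j$ genuinely share the non-constant plane $W$ (so that products stay in a single $2$-dimensional space where the spectral gap bites), and that the covering/chaining argument correctly handles points lying in two or three cells meeting at a single boundary node without an uncontrolled blow-up in the number of chaining steps. The eigenvalue bound $|\lambda_2|<1$ itself is already secured by Lemma~\ref{lemma eigenvalue less than 1} (and Remark~\ref{rem evs of Aj}(iii)), so the analytic heart is really the uniform decay $\|A_{w_1}\cdots A_{w_n}|_W\|\le Mc^n$, which is routine once $W$-invariance is established but does require that the matrices, while not equal, act on a common plane — a point worth verifying explicitly from~\cite[Theorem 3.3]{A++16}.
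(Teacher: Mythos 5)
Your overall architecture (iterate the harmonic extension matrices, get geometric decay of oscillation on cells, conclude uniform continuity and extend to $K_\infty$) matches the paper's, but the analytic heart of your argument has a genuine gap. You reduce everything to the bound $\|A_{w_1}\cdots A_{w_n}|_W\|\le Mc^n$ and justify it by saying each $A_j$ has eigenvalues $\lambda_2,\lambda_3$ of modulus $<1$ on a \emph{common} invariant plane $W$, after which a ``standard argument'' gives decay of products. Neither half of this holds as stated. First, the existence of a common complementary invariant plane is exactly what Remark~\ref{rem evs of Aj}(ii) warns against: the $\lambda_2$- and $\lambda_3$-eigenvectors vary with $j$, so the span of the non-constant eigenvectors is a $j$-dependent plane, and there is no reason it should be the mean-zero plane (for the analogous harmonic extension matrices of the Sierpinski gasket it is not; column sums of $A_j$ need not be $1$, and the natural invariant complement is the kernel of the $j$-dependent left $1$-eigenvector). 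Second, and more seriously, even granting a common invariant plane, ``each factor has spectral radius $<1$'' does \emph{not} imply geometric decay of arbitrary products of the three different matrices: this is a joint spectral radius issue, and the adapted-norm trick you invoke works for a single matrix, not for a family. The paper supplies precisely the missing ingredient: Lemma~\ref{lem contract P_0} shows that all three matrices contract one and the same quadratic form, $\pdf{\net_0}[A_j h_0]\le r^2\pdf{\net_0}[h_0]$ with $r=|\lambda_2|<1$, the proof using harmonicity and the self-similar identity~\eqref{eq self sim eq harmonic}; this common Lyapunov (semi)norm is what makes products of distinct $A_j$'s decay, and the seminorm controls oscillation on $V_0$ because its kernel is exactly the constants.

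A second, smaller gap is in your chaining step: the claim that ``$V_*$ at level $n$ is exactly $\bigcup_{|w|=n}G_w(V_0)$'' is false, since a level-$n$ cell contains vertices of all deeper generations, and two nearby points of $V_*$ need not lie on the boundary of any common cell. Your oscillation bound only controls values at the three boundary vertices $G_w(V_0)$, so you still need to pass from boundary oscillation to oscillation over all of $G_w(V_*)$. The paper does this with the maximum principle (harmonic functions attain their max and min on a cell at its boundary); alternatively one can telescope the decay estimate over nested subcells and sum a geometric series. With Lemma~\ref{lem contract P_0} in place of your operator-norm claim and the maximum principle in place of the incorrect covering statement, your outline becomes the paper's proof.
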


Before proving this result we show the following key lemma.

\begin{lemma}\label{lem contract P_0}
There exists $r\in(0,1)$ such that for any non-constant $h_0\in\ell(V_0)$ 
\[
\pdf{\net_0}[A_jh_0]\leq r^2\pdf{\net_0}[h_0],\qquad\quad j=1,2,3.
\]
\end{lemma}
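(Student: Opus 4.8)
The plan is to reduce the claimed contraction estimate to the spectral information already established in Lemma~\ref{lemma eigenvalue less than 1}, by working in a basis of eigenvectors of the relevant matrix $A_j$ and exploiting that the quadratic form $\pdf{\net_0}$ descends to the quotient $\ell(V_0)/\langle h_1\rangle$ of non-constant functions. First I would fix $j\in\{1,2,3\}$ and recall that $A_j$ has eigenvalues $\lambda_1=1,\lambda_2,\lambda_3$ with $|\lambda_3|<|\lambda_2|<1$, and that $h_1=(1,1,1)$ spans the eigenspace for $\lambda_1$. Since $\pdf{\net_0}[h_0]$ depends only on the differences $h_0(x)-h_0(y)$ — it vanishes precisely on the constants — both sides of the asserted inequality are unchanged if $h_0$ is modified by a constant, so it suffices to control $A_j$ on a complement of $\langle h_1\rangle$, namely the span $W_j$ of the eigenvectors $h_2^{(j)},h_3^{(j)}$ associated with $\lambda_2,\lambda_3$.

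The key step is to show that $\pdf{\net_0}$ restricted to $W_j$ is comparable, up to fixed constants, to the restriction of a fixed Hermitian form on which $A_j$ acts nicely. Concretely, since $A_j|_{W_j}$ has eigenvalues $\lambda_2,\lambda_3$ with $|\lambda_2|,|\lambda_3|<1$, writing $h_0 = c\,h_1 + a_2 h_2^{(j)} + a_3 h_3^{(j)}$ gives $A_j h_0 = c\,h_1 + \lambda_2 a_2 h_2^{(j)} + \lambda_3 a_3 h_3^{(j)}$. Using the matrix representation $D_0^2$ from~\eqref{eq def QP}, one has $\pdf{\net_0}[h_0] = \langle D_0^2 h_0, h_0\rangle$, and because $D_0^2 h_1 = 0$ this equals the value of the form on the $W_j$-component; expanding in the (generally non-orthogonal) basis $\{h_2^{(j)},h_3^{(j)}\}$ produces a positive-definite $2\times2$ Gram matrix $M_j$ with entries $\langle D_0^2 h_k^{(j)}, h_\ell^{(j)}\rangle$. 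Then $\pdf{\net_0}[h_0] = \bar a^{\!\top} M_j a$ with $a=(a_2,a_3)$, while $\pdf{\net_0}[A_j h_0] = \bar a^{\!\top} \Lambda^* M_j \Lambda\, a$ where $\Lambda=\operatorname{diag}(\lambda_2,\lambda_3)$. The desired bound is therefore $\Lambda^* M_j \Lambda \preceq r^2 M_j$ as Hermitian forms, for some $r<1$ independent of $j$; this follows by the standard estimate $\bar a^{\!\top}\Lambda^* M_j \Lambda\, a \le \max(|\lambda_2|^2,|\lambda_3|^2)\cdot\|M_j^{1/2}a\|^2 \cdot \kappa(M_j)$ is too lossy, so instead I would argue via the generalized eigenvalues of $(\Lambda^* M_j\Lambda, M_j)$, which are bounded above by $\max(|\lambda_2|,|\lambda_3|)^2 = |\lambda_2|^2$ whenever $M_j$ and $\Lambda$ can be simultaneously handled — and in fact one may take $r^2 = |\lambda_2|^2$ after checking the cross term.

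Let me make the last point precise, since it is the crux. The clean way is: for $h_0$ non-constant, decompose $h_0 = h_0^{(2)} + h_0^{(3)}$ into the $\lambda_2$- and $\lambda_3$-eigencomponents (mod constants); then $A_j h_0 = \lambda_2 h_0^{(2)} + \lambda_3 h_0^{(3)}$ and by the Cauchy–Schwarz inequality applied to the positive semidefinite form $\pdf{\net_0}$,
\[
\pdf{\net_0}[A_j h_0] = \pdf{\net_0}\big[\lambda_2 h_0^{(2)} + \lambda_3 h_0^{(3)}\big] \le |\lambda_2|^2\,\pdf{\net_0}[h_0]
\]
provided the mixed term $2\Re\big(\bar\lambda_2\lambda_3 \langle D_0^2 h_0^{(2)}, h_0^{(3)}\rangle\big) \le (|\lambda_2|^2+|\lambda_3|^2 - \text{something})\cdots$ — this requires knowing the sign/size of $\langle D_0^2 h_0^{(2)}, h_0^{(3)}\rangle$. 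If the eigenvectors were $D_0^2$-orthogonal the estimate would be immediate with $r=|\lambda_2|$; in general they are not, so I would instead use that any Hermitian pencil $(N,M)$ with $M\succ0$ satisfies $N \preceq \mu M$ where $\mu$ is the top generalized eigenvalue, and compute that the generalized eigenvalues of $(\Lambda^* M_j\Lambda, M_j)$ are exactly $|\lambda_2|^2$ and $|\lambda_3|^2$ when $\Lambda$ is a real scalar multiple on each eigenline — which fails for complex $\lambda_k$, so the honest bound is $\mu \le \|\Lambda\|^2 \le |\lambda_2|^2$ only after a similarity that makes $M_j=I$; doing that conjugation replaces $\Lambda$ by a matrix with the same spectrum but the operator norm can exceed $|\lambda_2|$. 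Hence I expect the genuine obstacle to be handling this cross term: the safe route is to set $r$ to be the square root of the largest generalized eigenvalue of $(\Lambda^* M_j\Lambda, M_j)$, note it is strictly less than $1$ because $\Lambda^* M_j \Lambda \prec M_j$ (as $|\lambda_2|,|\lambda_3|<1$ and $M_j\succ0$ force $\langle \Lambda^* M_j\Lambda\, a,a\rangle < \langle M_j a,a\rangle$ for all $a\ne 0$, by writing it as $\langle M_j \Lambda a,\Lambda a\rangle$ and using $\|\Lambda a\| < \|a\|$ componentwise together with positive-definiteness — here one does need that $M_j$'s off-diagonal does not conspire, which can be checked from the explicit $A_j,D_0^2$), and finally take $r=\max_j r_j<1$ over the three matrices. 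Continuity of harmonic functions in Theorem~\ref{thm harmonic cont} then follows by iterating: on a word $w=w_1\cdots w_n$, $\pdf{\net_0}[h|_{G_w(V_0)}]\le r^{2n}\pdf{\net_0}[h|_{V_0}]$, giving $|h(p)-h(q)|\to0$ as $p,q$ range over a shrinking cell, hence uniform continuity on $V_*$ and a unique continuous extension to $K_\infty$.
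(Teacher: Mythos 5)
You have correctly located the delicate point---the eigenvectors $h_2,h_3$ of $A_j$ need not be orthogonal with respect to $D_0^2$, so $|\lambda_2|<1$ does not by itself yield a one-step contraction of $\pdf{\net_0}$---but your proposal never actually resolves it. The inequality you finally lean on, $\Lambda^*M_j\Lambda\prec M_j$, is nothing but the lemma itself rewritten in the basis $\{h_2,h_3\}$, and the reason you offer for it (``$\|\Lambda a\|<\|a\|$ componentwise together with positive-definiteness'') is false as a general principle: for $M=\left(\begin{smallmatrix}1&0.99\\0.99&1\end{smallmatrix}\right)\succ0$ and $\Lambda=\operatorname{diag}(0.9,-0.9)$, the vector $a=(1,-1)$ gives $\langle M\Lambda a,\Lambda a\rangle\approx 3.22$ while $\langle Ma,a\rangle=0.02$; a diagonal matrix whose eigenvalues have modulus $<1$ need not contract a non-diagonal positive definite Hermitian form in a single step. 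The cross term is therefore not a nuisance one can wave at: in the structured case at hand, where $\lambda_3=\lambda_2/3$ and hence $\Lambda=\lambda_2\operatorname{diag}(1,\tfrac13)$, the bound with $r=|\lambda_2|$ is equivalent to $\operatorname{diag}(1,\tfrac13)\,M_j\operatorname{diag}(1,\tfrac13)\preceq M_j$, which holds if and only if the off-diagonal entry $\langle D_0^2h_2,h_3\rangle$ vanishes. You acknowledge that the needed positivity ``can be checked from the explicit $A_j,D_0^2$'' of \cite[Theorem 3.3]{A++16}, but the check is never carried out, so the crux of the lemma is asserted rather than proved. (For comparison, the paper's proof takes the same route---decompose $h_0=a_2h_2+a_3h_3$ modulo constants and invoke Lemma~\ref{lemma eigenvalue less than 1}---and claims $r=|\lambda_2|$ by a termwise absolute-value estimate of the mixed terms; so your diagnosis of where the difficulty sits is sound, but your write-up stops exactly where the work begins.)

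If you want to close the gap without computing the Gram matrix, you can obtain a uniform (possibly larger than $|\lambda_2|$) $r<1$ from harmonicity instead of from the spectrum alone. Every $h_0\in\ell(V_0)$ extends harmonically, so~\eqref{eq self sim eq harmonic} gives $\sum_{j=1}^3\pdf{\net_0}[A_jh_0]=\pdf{\net_0}[h_0]$ with all three summands nonnegative; if $\pdf{\net_0}[A_jh_0]=\pdf{\net_0}[h_0]$ held for some $j$ and some non-constant $h_0$, then $A_kh_0$ would be constant for both $k\neq j$, and since each $A_k$ is invertible (its eigenvalues $1,\lambda_2,\lambda_2/3$ are nonzero) and fixes the constants, this would force $h_0$ itself to be constant, a contradiction. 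Hence $\max_j\pdf{\net_0}[A_jh_0]/\pdf{\net_0}[h_0]<1$ for every non-constant $h_0$; this ratio is invariant under adding constants and under rescaling, so it defines a continuous function on the compact unit sphere of $\pdf{\net_0}$ in $\ell(V_0)/\BbC(1,1,1)$, and its maximum supplies the required $r^2<1$ simultaneously for $j=1,2,3$. Your concluding paragraph on how the lemma feeds into Theorem~\ref{thm harmonic cont} is fine, but it is downstream of the missing step.
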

\begin{proof}
Let $h_0\in\ell(V_0)$ be non-constant and let $D^2_0$ be the matrix representation~\eqref{eq def QP} of $\pdf{\net_0}$. Since $D^2_0$ is non-negative definite and symmetric, $\langle D^2_0A_jh_0,h_0\rangle\geq 0$ for any $j=1,2,3$. Consider $j$ arbitrary but fixed.

Let $h_2$ and $h_3$ denote the eigenvectors of $A_j$ associated with the eigenvalues $\lambda_2$, resp. $\lambda_3$ given in Remark~\ref{rem evs of Aj} (i). Non-constant harmonic functions are thus the $2$-dimensional subspace of $\Hm_{\SL}(V_*)$ spanned by $\{h_2,h_3\}$ and hence $h_0=\sum_{k=2}^3a_kh_k$, $a_k\in\BbC$. Then,
\begin{align*}
\pdf{\net_0}[A_jh_0]&=\langle D^2_0A_jh_0,A_j h_0\rangle=\bigg|\sum_{k,l=2}^3a_k\overline{a_l}\langle D^2_0\lambda_kh_k,\lambda_lh_l\rangle\bigg|\\
&\leq \sum_{k,l=2}^3|\lambda_k\overline{\lambda_l}|\langle D^2_0a_kh_k,a_lh_l\rangle\leq r^2\langle D^2_0h_0,h_0\rangle
\end{align*}
with $r=|\lambda_2|<1$ in view of Lemma~\ref{lemma eigenvalue less than 1}.
\end{proof}

\begin{proof}[Proof of Theorem~\ref{thm harmonic cont}]
Without loss of generality, let $h\in\Hm_{\SL}(V_*)$ be non-constant and such that $\pdf{\SL}[h]=1$. For each $\varepsilon >0$ and $m\geq 0$ large enough, points inside a $m$-cell $G_w(V_*)$, $w\in W_m$, satisfy $|x-y|<\delta$ for some $\delta>0$. Since $h$ is harmonic, the maximum principle guarantees that $h$ takes its maximum and minimum value within $G_w(V_*)$ on the boundary $G_w(V_0)$. Thus, for any $x,y\in G_w(V_0)$, by definition of $\net_m$ we have that
\[
|h(x)-h(y)|^2\leq\frac{2|\Zeff{\SL}|^2}{\Re(\Zeff{\SL})}\sum_{\substack{x,y\in G_w(V_0)\\ \{x,y\}\in E_m}}\pdf{\net_0}[h]_{xy}=\frac{2|\Zeff{\SL}|^2}{\Re(\Zeff{\SL})}\pdf{\net_0}[h_{|_{G_w(V_0)}}]
\]
and since $h$ is harmonic, $h_{|_{G_w(V_0)}}=A_{w_1}\cdots A_{w_m}h_{|_{V_0}}$, with $w=w_1\ldots w_m$. Applying repeatedly Lemma~\ref{lem contract P_0} yields
\[
\pdf{\net_0}[h_{|_{G_w(V_0)}}]\leq r^{2m}\pdf{\net_0}[h_{|_{V_0}}]
\]
and hence
\[
|h(x)-h(y)|\leq |\Zeff{\SL}|\sqrt{2/\Re(\Zeff{\SL})}\,r^m<\varepsilon
\]
for $m$ large.
\end{proof}

As an immediate consequence of this result, the space of harmonic functions on $K_\infty$, denoted by $\Hm_{\SL}(K_\infty)$, is well-defined. For any $h\in\Hm_{\SL}(K_\infty)$ we will identify $\pdf{\SL}[h]$ with the former $\pdf{\SL}[h_{|_{V_*}}]$ to obtain the power dissipation associated with the F-S ladder for harmonic potentials on the fractal dust $K_\infty$.

\begin{definition}\label{def pd SL}
The power dissipation in $K_\infty$ associated with $\net_{\SL}$ of a function $h\in\Hm_{\SL}(K_\infty)$ is given by
\[
\pdf{\SL}[h]=\lim_{\varepsilon\to 0_+}\lim_{n\to\infty}\pdf{\net_{\varepsilon,n}}[h_{|_{V_n}}].
\]
\end{definition}

\section{Continuity of the power dissipation measure}\label{section cpdm}
In analogy to energy measures, this section aims to construct a measure on $K_\infty$ that can be understood as the ``power dissipation measure'' associated with harmonic potentials. The main theorem states the existence of this continuous (atomless) measure.

\begin{theorem}\label{thm pd is a measure}
For each non-constant harmonic function $h\in\Hm_{\SL}(K_\infty)$, the power dissipation $\pdf{\SL}$ induces a continuous measure $\nu_h$ on $K_\infty$ with $\supp\nu_h=C_\infty$.
\end{theorem}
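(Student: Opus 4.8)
The plan is to mimic the classical construction of Kusuoka-type energy measures on self-similar fractals, now carried out for the complex quadratic form $\pdf{\SL}$ restricted to harmonic functions. First I would define, for each word $w\in W_*$, the quantity
\[
\nu_h(C_w):=\pdf{\net_0}[h_{|_{G_w(V_0)}}],
\]
where $C_w:=G_w(C_\infty)$ is the $|w|$-cell of the Cantor dust. The self-similarity identity~\eqref{eq self sim eq harmonic}, namely $\pdf{\SL}[h]=\sum_{j=1}^3\pdf{\net_0}[A_jh_{|_{V_0}}]$ applied inside each cell, gives the additivity $\nu_h(C_w)=\sum_{i\in S}\nu_h(C_{wi})$; together with $\nu_h(C_\infty)=\pdf{\SL}[h]<\infty$ this makes $\{\nu_h(C_w)\}_{w\in W_*}$ a consistent family of weights on the cylinder sets of $C_\infty\cong S^{\mbbN}$. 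By the Carathéodory/Kolmogorov extension theorem for the (compact, totally disconnected) coding space, this extends uniquely to a Borel measure $\nu_h$ on $C_\infty$, and then to $K_\infty$ by setting $\nu_h(K_\infty\setminus C_\infty)=0$ (recall $K_\infty=C_\infty\cup\widetilde V_*$ and $\widetilde V_*$ is countable, so this is forced once we check $\nu_h$ has no atoms).

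Next I would prove that $\nu_h$ is \emph{continuous}, i.e. atomless. Every point of $C_\infty$ is $\bigcap_{n\ge1}C_{w|_n}$ for an infinite word $w$, so it suffices to show $\nu_h(C_{w|_n})\to 0$ as $n\to\infty$. This is exactly the content of Lemma~\ref{lem contract P_0} iterated: $\nu_h(C_{w|_n})=\pdf{\net_0}[A_{w_1}\cdots A_{w_n}h_{|_{V_0}}]\le r^{2n}\pdf{\net_0}[h_{|_{V_0}}]\to 0$, with $r=|\lambda_2|<1$ by Lemma~\ref{lemma eigenvalue less than 1}. (If $h_{|_{G_w(V_0)}}$ becomes constant at some finite stage one still gets $0$; the estimate in Lemma~\ref{lem contract P_0} only needs the non-constant case and the constant case is trivial.) Hence no singleton carries positive mass, and since $\widetilde V_*$ is countable, $\nu_h$ is a well-defined continuous Borel probability-type measure (mass $\pdf{\SL}[h]$) on $K_\infty$.

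Finally I would identify the support. On the one hand $\supp\nu_h\subseteq C_\infty$ is immediate from the construction. For the reverse inclusion I must show $\nu_h(C_w)>0$ for every $w\in W_*$; equivalently $\pdf{\net_0}[h_{|_{G_w(V_0)}}]>0$, i.e. $A_{w_1}\cdots A_{w_{|w|}}h_{|_{V_0}}$ is non-constant whenever $h_{|_{V_0}}$ is non-constant. Since the $A_j$ share the eigenvalue $1$ with common eigenvector $(1,1,1)$ spanning the constants, and $|\lambda_2|,|\lambda_3|<1\ne 0$ so all three eigenvalues are nonzero, each $A_j$ is invertible and maps the complementary $2$-dimensional "non-constant" directions into themselves bijectively; I would record that none of the $A_j$ collapses a non-constant vector to a constant one — this is where the explicit computation in the Remark following Lemma~\ref{lemma eigenvalue less than 1} (showing $A_3h_2$ is non-constant, using that $\Zeff{\SL}$ has positive real part while $Z_C$ is purely imaginary) is used, and an analogous check covers $A_1,A_2$. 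Consequently every cylinder has positive $\nu_h$-mass, so $\supp\nu_h=C_\infty$.

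\textbf{Main obstacle.} The delicate point is the support statement: one needs that the products $A_{w_1}\cdots A_{w_n}$ never annihilate the non-constant subspace, which does not follow formally from $\lambda_2\ne0$ alone because the eigenbases of the three matrices differ, so $A_iA_j$ could in principle send a non-constant eigenvector of $A_j$ to a multiple of $(1,1,1)$. Ruling this out requires either the explicit matrix entries from~\cite[Theorem 3.3]{A++16} or a structural argument (e.g.\ that each $A_j$ preserves the mean-zero hyperplane, making the constants an \emph{invariant complement}, not just an eigendirection — if that holds, non-constancy is automatically preserved and the support follows cleanly). Verifying that invariance of the mean-zero hyperplane, or otherwise doing the linear-algebra bookkeeping over $\BbC$, is the part that needs care; everything else is the standard Kolmogorov-extension-plus-contraction argument.
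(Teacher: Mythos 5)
Your construction is correct, but it is a genuinely different route from the paper's. You define the cell weights directly by $\nu_h(C_w)=\pdf{\net_0}[h_{|_{G_w(V_0)}}]$ and get $\sigma$-additivity from the exact consistency identity~\eqref{eq self sim eq harmonic} together with the Kolmogorov/Carath\'eodory extension on the coding space $S^{\mbbN}\cong C_\infty$, and atomlessness from the contraction Lemma~\ref{lem contract P_0} alone. The paper instead defines $\nu_h(T_w)$ as the triple limit of edge-sums $\sum\pdf{\net_{\varepsilon,n}}[h]_{xy}$ over edges inside the cell, and must then work harder: finite additivity requires showing that the single connecting edge between two disjoint cells carries no mass in the limit (Remark~\ref{rem obs for nu_h}(i)), $\sigma$-additivity requires an interchange of limits justified by uniform boundedness by $\pdf{\SL}[h]$, and atomlessness at accumulation points uses the continuity of $h$ (Theorem~\ref{thm harmonic cont}) rather than the contraction lemma. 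Your two constructions produce the same measure, since by Remark~\ref{rem obs for nu_h}(ii) (equivalently~\eqref{eq BST 5.2}) the paper's cell mass equals $\frac{\Re(\Zeff{\SL})}{2|\Zeff{\SL}|^2}\sum_{x,y\in\partial T_w}|h(x)-h(y)|^2=\pdf{\net_0}[h_{|_{G_w(V_0)}}]$. What the paper's route buys is that the measure is manifestly the limit of the physical $\varepsilon$-regularized dissipations, which is the interpretation the theorem is after; what your route buys is a cleaner proof of countable additivity and no reliance on Theorem~\ref{thm harmonic cont}.

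Concerning the point you single out as the main obstacle (positivity of every cylinder, hence $\supp\nu_h=C_\infty$): it is not actually an obstacle, and the fix is simpler than either of your two suggestions. Each $A_j$ has eigenvalues $1,\lambda_2,\lambda_3$, all nonzero, so $A_j$ is invertible, and the constant line $\mathrm{span}\{(1,1,1)\}$ is an eigenspace of $A_j$ with eigenvalue $1$; hence $A_j^{-1}\big(\mathrm{span}\{(1,1,1)\}\big)=\mathrm{span}\{(1,1,1)\}$, i.e.\ a non-constant vector can never be mapped by $A_j$ (and so by any product $A_{w_n}\cdots A_{w_1}$) to a constant vector. Since $\Re(\Zeff{\SL})>0$ under the filter condition, $\pdf{\net_0}[v]=0$ if and only if $v$ is constant on $V_0$, so every cylinder has strictly positive mass. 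In particular you need neither the explicit entries of the $A_j$ nor invariance of the mean-zero hyperplane (which the $A_j$ need not preserve, as they are not symmetric); your worry that differing eigenbases could let a product collapse a non-constant vector onto $(1,1,1)$ is ruled out by the invertibility argument above. Note that the paper itself only asserts the support statement, so spelling out this two-line argument actually strengthens the write-up.
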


Before proving this result, we provide some useful observations. In this and the next section, $n$-cells will be denoted by $T_w=G_w(K_\infty)$ for any $w\in W_n$, $n\geq 0$. 
\begin{remark}\label{rem obs for nu_h}
The following hold.
\begin{itemize}[leftmargin=.3in]

\item[(i)] For any $h\in\Hm_{\SL}(K_\infty)$ and $\{x,y\}\in E_m$, $m\geq 0$,
\[
\lim_{\varepsilon\to 0_+}\lim_{n\to\infty}\pdf{\net_{\varepsilon,n}}[h]_{xy}=\lim_{\varepsilon\to 0_+}\pdf{\net_{\varepsilon,m}}[h]_{xy}=0.
\]
\medskip

\item[(ii)] For any $h\in\Hm_{\SL}(K_\infty)$ and $w\in W_m$, $m\geq 1$,
\[
\lim_{\varepsilon\to 0_+}\lim_{n\to\infty}\sum_{\substack{x,y\in T_w\cap V_n \\ \{x,y\}\in E_n}}\pdf{\net_{\varepsilon,n}}[h]_{xy}=\frac{\Re(\Zeff{\SL})}{2|\Zeff{\SL}|^2}\sum_{x,y\in\partial T_w}|h(x)-h(y)|^2.
\] 
%
\end{itemize}
\end{remark}

\medskip

\begin{proof}[Proof of Theorem~\ref{thm pd is a measure}]
Let $h\in\Hm_{\SL}(K_\infty)$ be non-constant. For each $m$-cell $T_w$ define
\[
\nu_h(T_w):=\lim_{\varepsilon\to 0_+}\lim_{n\to\infty}\sum_{\substack{x,y\in T_w\cap V_n \\ \{x,y\}\in E_n}}\pdf{\net_{\varepsilon,n}}[h]_{xy}.
\]
Notice that since $h$ is harmonic,
\[
0\leq\nu_h(T_w)\leq\nu_h(K_\infty)=\pdf{\SL}[h]<\infty.
\]

Applying the same definition of $\nu_h$ to isolated points, we have that $\nu_h(\{x\})=0$ whenever $x\in V_m$ for some $m\geq 0$ 
because no edges are involved. If $x\in K_\infty$ is an accumulation point, it satisfies $x=\bigcap_{k\geq 1}T_{w_1\ldots w_k}$
for some infinite word $w_1w_2\ldots\in S^{\mbbN}$. In view of Remark~\ref{rem obs for nu_h}(ii) we have that
\begin{align*}
\nu_h(\{x\})&=\lim_{\varepsilon\to 0_+}\lim_{n\to\infty}\lim_{m\to\infty}\sum_{\substack{x,y\in T_{w_1\ldots w_m}\cap V_n \\ \{x,y\}\in E_n}}\pdf{\net_{\varepsilon,n}}[h]_{xy}\\
&=\lim_{\varepsilon\to 0_+}\lim_{n\to\infty}\lim_{m\to\infty}\frac{\Re(\Zeff{\varepsilon})}{2|\Zeff{\varepsilon}|^2}\sum_{\substack{x,y\in\partial T_{w_1\ldots w_m}\\ \{x,y\}\in E_n}}|h(x)-h(y)|^2.
\end{align*}
By Theorem~\ref{thm harmonic cont}, $h$ is continuous and therefore for any $\delta>0$ we find $m_0\geq 0$ large enough such that 
$\nu_h(\{x\})<\frac{3\Re(\Zeff{\SL})}{2|\Zeff{\SL}|^2}\delta^2$ for all $m\geq m_0$. Thus, $\nu_h(\{x\})=0$.

Let us now prove finite-additivity of $\nu_h$: Let $T_{w},T_{v}$ be two disjoint cells of levels $n_1$, $n_2$. If these cells can be connected by edges without additional vertices, there is at most one such connecting edge $\{p_1,p_2\}\in E_m$ with $m=\max\{n_1,n_2\}$. Let us suppose this is the case. In view of Remark~\ref{rem obs for nu_h}(i) we have 
\begin{align*}
\nu_h(T_{w}\cup T_{v})&=\lim_{\varepsilon\to 0_+}\lim_{n\to\infty}\sum_{\substack{x,y\in T_{w}\cap V_n \\ \{x,y\}\in E_n}}\pdf{\net_{\varepsilon,n}}[h]_{xy}+\lim_{\varepsilon\to 0_+}\lim_{n\to\infty}\pdf{\net_{\varepsilon,n}}[h]_{p_1p_2}\\
&+\lim_{\varepsilon\to 0_+}\lim_{n\to\infty}\sum_{\substack{x,y\in T_{v}\cap V_n \\ \{x,y\}\in E_n}}\pdf{\net_{\varepsilon,n}}[h]_{xy}=\nu_h(T_{w_1})+\nu_h(T_{w_2}).
\end{align*}
If there is no possible connecting edge, the above equality follows directly. The same argument applies for any finite union (both of cells or isolated points), since they can be connected by at most finitely many single edges.
 
In order to prove $\sigma$-additivity, consider a sequence of pairwise disjoint cells $\{T_{v(k)}\}_{k\geq 1}$, where $T_{v(k)}$ is a $n_k$-cell. Without loss of generality, assume that $n_k\leq n_{k+1}$. Notice that, since the cells are pairwise disjoint, it is only possible to connect $T_{v(k)}$ and $T_{v(k+1)}$ by an edge $\{x,y\}\in E_{n_{k+1}}$ if $n_{k+1}=n_k+1$. Hence, we can assume that there will be at most and edge $\{x_{k+1},y_{k+1}\}\in E_{n_{k+1}}$ joining $T_{v(k)}$ and $T_{v(k+1)}$. Then,
\begin{align*}
\nu_h\big(\bigcup_{k\geq 1}T_{v(k)}\big)&=\lim_{\varepsilon\to 0_+}\lim_{n\to\infty}\lim_{m\to\infty}\sum_{k=1}^m\sum_{\substack{x,y\in T_{v(k)}\cap V_n \\ \{x,y\}\in E_n}}\pdf{\net_{\varepsilon,n}}[h]_{xy}\\
&+\lim_{\varepsilon\to 0_+}\lim_{n\to\infty}\lim_{m\to\infty}\sum_{k=2}^{m+1}\pdf{\net_{\varepsilon,n_k}}[h]_{x_ky_k}=\sum_{k=1}^\infty\nu_h(T_{w(k)})+0,
\end{align*}
where last equality follows by Remark~\ref{rem obs for nu_h}(i) after interchanging the order of the limits, which is possible because both summands are uniformly bounded by $\pdf{\SL}[h]$.
The same argument applies to countable unions of isolated points, which in particular implies that $\supp \nu_h =C_\infty$.
%


Finally, by Carath\'eodory's extension theorem, $\nu_h$ admits a unique extension to a (finite) measure on $K_\infty$ and this measure is continuous because isolated points have no mass.
%
\end{proof}

\begin{corollary}
For any $m\geq 0$, $w\in W_m$, and any $m$-cell $T_w$ it holds that
\[
\nu_h(T_w)\asymp\osc(h_{|_{T_w}})^2,
\]
where $\osc(h_{|_{T_w}})=\max\limits_{x\in T_w}h(x)-\min\limits_{y\in T_w}h(y)$.
\end{corollary}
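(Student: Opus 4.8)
The plan is to establish the two-sided bound $\nu_h(T_w) \asymp \osc(h_{|_{T_w}})^2$ by squeezing $\nu_h(T_w)$ between constant multiples of the squared oscillation, with constants depending only on the fixed effective impedance $\Zeff{\SL}$ (and hence on the frequency). The key structural input is that $T_w = G_w(K_\infty)$ is a scaled copy of the whole fractal dust and that, by harmonicity, $h_{|_{T_w}}$ is (up to the affine identification $G_w$) again a harmonic function on $K_\infty$ whose boundary values are $A_{w_1}\cdots A_{w_m} h_{|_{V_0}}$. So by the self-similarity of $\nu_h$ it suffices to prove $\nu_h(K_\infty) \asymp \osc(h_{|_{K_\infty}})^2$ for \emph{every} non-constant harmonic $h$, with uniform constants, and then replace $h$ by the harmonic function with boundary data $h_{|_{G_w(V_0)}}$.

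First I would record the self-similar scaling. Arguing exactly as in the proof of Lemma~\ref{lemma eigenvalue less than 1} via~\eqref{eq self sim eq harmonic}, one gets $\nu_h(T_w) = \pdf{\net_0}[h_{|_{G_w(V_0)}}] = \pdf{\net_0}[A_{w_1}\cdots A_{w_m}h_{|_{V_0}}]$, i.e. $\nu_h(T_w) = \langle D^2_0 A_w h_{|_{V_0}}, A_w h_{|_{V_0}}\rangle$ with $A_w := A_{w_1}\cdots A_{w_m}$. Likewise, by Theorem~\ref{thm harmonic cont} and the maximum principle the oscillation $\osc(h_{|_{T_w}})$ equals the oscillation of the three boundary values $A_w h_{|_{V_0}}$ over $V_0 = \{p_1,p_2,p_3\}$. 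Thus everything reduces to the purely linear-algebraic claim that on the $2$-dimensional space of non-constant vectors in $\BbC^3$, the quadratic form $g \mapsto \langle D^2_0 g, g\rangle$ and the quantity $g \mapsto (\max_i \Re g_i - \min_i \Re g_i)^2$ (or rather its natural complex analogue $\max_{i,j}|g_i-g_j|^2$, which is what controls continuity and what $\osc$ should be read as in the complex setting) are comparable, with comparison constants depending only on $\Re(\Zeff{\SL})/|\Zeff{\SL}|^2$.

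That linear-algebra comparison is the only real content, and it is short. The form $D^2_0$ from~\eqref{eq def QP} is, up to the positive scalar $c_0 := \Re(\Zeff{\SL})/(2|\Zeff{\SL}|^2)$, the standard graph Laplacian quadratic form of the triangle, so $\langle D^2_0 g, g\rangle = c_0(|g_1-g_2|^2 + |g_2-g_3|^2 + |g_1-g_3|^2)$. On the quotient by constants this is a genuine norm (squared), and on that two-dimensional quotient the three pairwise differences $|g_i-g_j|$ and the max $\max_{i,j}|g_i-g_j|$ are all equivalent up to absolute constants (e.g. $\max_{i,j}|g_i-g_j|^2 \le \sum_{i<j}|g_i-g_j|^2 \le 3\max_{i,j}|g_i-g_j|^2$ trivially, and for real-valued $g$ one has $\max_i g_i - \min_i g_i = \max_{i,j}|g_i-g_j|$ directly). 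Hence
\[
c_0 \,\osc(h_{|_{T_w}})^2 \;\le\; \nu_h(T_w) \;\le\; 3c_0\,\osc(h_{|_{T_w}})^2,
\]
which is the assertion with $\asymp$ hiding only the factor $3$ (independent of $w$, $m$, and $h$). No obstacle here is genuine; the one point to be careful about is the interpretation of $\osc$ for complex-valued $h$ --- I would simply note that the corollary is understood with $\osc(h_{|_{T_w}})$ denoting $\max_{x,y\in T_w}|h(x)-h(y)|$ (the two agree for real $h$), so that the statement is frequency-uniform and matches the way $\osc$ was used in the proof of Theorem~\ref{thm harmonic cont}.
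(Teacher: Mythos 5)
Your proposal is correct and follows essentially the same route as the paper: identify $\nu_h(T_w)$ with the boundary quadratic form $\frac{\Re(\Zeff{\SL})}{2|\Zeff{\SL}|^2}\sum_{x,y\in\partial T_w}|h(x)-h(y)|^2$ (Remark~\ref{rem obs for nu_h}(ii)), use the maximum principle to realize $\osc(h_{|_{T_w}})$ as a boundary pairwise difference, and squeeze the sum of the three squared differences between the maximum and three times the maximum, yielding the same constants $c_0$ and $3c_0$. Your side remark about reading $\osc$ as $\max_{x,y}|h(x)-h(y)|$ for complex-valued $h$ is a reasonable clarification of the paper's implicit convention, not a different argument.
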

\begin{proof}
First of all, recall from Remark~\ref{rem obs for nu_h}(ii) that
\[
\nu_h(T_w)=\frac{\Re(\Zeff{\SL})}{2|\Zeff{\SL}|^2}\sum_{x,y\in\partial T_w}|h(x)-h(y)|^2.
\]
By the maximum principle and since $h$ is harmonic, $h_{|_{T_w}}$ takes its maximum and minimum on the boundary $\partial T_w$. Hence, $\osc(h_{|_{T_w}})=|h(x)-h(y)|$ for some $x,y\in\partial T_w$ and the definition of $\nu_h$ yields
\[
\frac{\Re(\Zeff{\SL})}{2|\Zeff{\SL}|^2}\osc(h_{|_{T_w}})^2\leq\nu_h(T_w)\leq\frac{3\Re(\Zeff{\SL})}{2|\Zeff{\SL}|^2}\osc(h_{|_{T_w}})^2.
\]
\end{proof}
\section{Singularity of the power dissipation measure}
This section is devoted to proving that the power dissipation measure $\nu_h$ discussed in the preceding section is singular with respect to the Bernouilli measure $\mu$ on $K_\infty$ that satisfies
\[
\mu(T_{w_1\ldots w_n})=\mu_{w_1}\cdots\mu_{w_n}
\]
for any $n$-cell $T_{w_1\ldots w_n}$, $w_1\ldots w_n\in W_n$, where $\sum_{i\in S}\mu_i=1$. In particular in this case, we will consider $\mu_1=\mu_2=\mu_3=\frac{1}{3}$. Together with the measure $\mu$, $K_\infty$ can be seen as a probability space. Notice that, as it happened with $\nu_h$, $\supp\mu =C_\infty$, where $C_\infty$ is the Cantor dust defined in~\eqref{eq Kalpha}.

\medskip

Recall that any element in the support of $\mu$ is a non-isolated point of $K_\infty$ such that $x=\bigcap_{n\geq 1}T_{w_1\ldots w_n}$ for some $w_1w_2\ldots\in S^{\mbbN}$. For these points, we define the (random) matrices $M_n(x)=A_{w_n}$, where $A_j$, $j\in S$, are the matrices of the harmonic extension algorithm~\eqref{eq harmonic matrices}. The matrices $M_n(x)$ are statistically independent with respect to $\mu$.

\medskip

The next result is based on a special case of~\cite{Kus89} and we will mainly follow the proof given in~\cite[Theorem 5.1]{BST99}, including details for completeness. Since we are only dealing with non-constant harmonic functions, we will restrict to the $2$-dimensional subspace of $\Hm_{\SL}(K_\infty)$ spanned by the two harmonic functions $h_2,h_3$ associated to the eigenvalues $\lambda_2,\lambda_3$ from Remark~\ref{rem evs of Aj}(i). 

\begin{theorem}\label{thm singularity of pd}
Assume that for a non-constant $h\in\Hm_{\SL}(K_\infty)$ there exists $m\geq 1$ such that the mapping $x\mapsto \|D_0M_m(x)\cdots M_1(x)h_{|_{V_0}}\|$ is non-constant. Then, $\nu_h$ is singular with respect to $\mu$.
\end{theorem}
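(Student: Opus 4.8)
The plan is to realize $\nu_h$ as the distribution of an ergodic-type random variable built from the products of the independent random matrices $M_n(x)$, and then apply a $0$--$1$ law: either $\nu_h\ll\mu$ with a density that is a martingale limit, or $\nu_h\perp\mu$. First I would record the self-similar decomposition of $\nu_h$: by Theorem~\ref{thm pd is a measure} and Remark~\ref{rem obs for nu_h}(ii), for each $w\in W_n$ and $i\in S$,
\[
\nu_h(T_{wi})=\langle D_0^2 A_i(h|_{G_w(V_0)}),A_i(h|_{G_w(V_0)})\rangle,\qquad h|_{G_w(V_0)}=A_{w_n}\cdots A_{w_1}h|_{V_0},
\]
so that, writing $P_n(x)=M_n(x)\cdots M_1(x)$ and $g_n(x)=\|D_0 P_n(x)h|_{V_0}\|$,
\[
\nu_h(T_{w_1\ldots w_n})=g_n(x)^2\quad\text{for any }x\in T_{w_1\ldots w_n}.
\]
Then I would form the Radon--Nikodym ratio $Y_n(x):=\nu_h(T_{w_1(x)\ldots w_n(x)})/\mu(T_{w_1(x)\ldots w_n(x)})=3^n g_n(x)^2/\nu_h(K_\infty)^{?}$ (after normalizing $\pdf{\SL}[h]=1$), and observe that $(Y_n,\F_n)$ is a non-negative martingale with respect to $\mu$ and the filtration $\F_n=\sigma(w_1,\ldots,w_n)$. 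By the martingale convergence theorem $Y_n\to Y_\infty$ $\mu$-a.e.; the Lebesgue decomposition of $\nu_h$ w.r.t. $\mu$ is then $d\nu_h=Y_\infty\,d\mu+d\nu_h^{\perp}$, and $\nu_h\perp\mu$ exactly when $Y_\infty=0$ $\mu$-a.e.

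Next I would exploit independence and the multiplicative structure to get a $0$--$1$ law. Write $Y_n=\prod_{k=1}^n R_k$ where, conditionally, $R_k$ depends on the direction of $D_0 P_{k-1}(x)h|_{V_0}$ and on the independent choice $w_k$; more precisely, following \cite[Theorem~5.1]{BST99} and \cite{Kus89}, introduce the projective variable $\xi_k(x)\in\BbP^1$ recording the line spanned by $P_k(x)h|_{V_0}$ in the $2$-dimensional non-constant subspace $\langle h_2,h_3\rangle$, and note $(\xi_k)$ is a Markov chain on $\BbP^1$ driven by the i.i.d.\ matrices; then $\log Y_n=\sum_{k=1}^n\psi(\xi_{k-1},w_k)$ for an explicit bounded cocycle $\psi$. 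Since $(\xi_k)$ has a unique stationary measure (the matrices are invertible and the action is non-degenerate — this is where the hypothesis that $x\mapsto\|D_0M_m(x)\cdots M_1(x)h|_{V_0}\|$ is non-constant enters, ruling out the degenerate case where all $M_k$ fix the relevant line and $Y_n$ is eventually constant), Birkhoff's ergodic theorem gives
\[
\frac{1}{n}\log Y_n(x)\longrightarrow \Lambda:=\int\psi\,d(\text{stationary measure})\qquad\mu\text{-a.e.}
\]
so $Y_\infty=0$ a.e.\ as soon as $\Lambda<0$, giving singularity; and $\Lambda\le 0$ always holds because $\BbE_\mu[Y_{n+1}\mid\F_n]=Y_n$ forces $\BbE[R_{k}\mid\F_{k-1}]=1$, hence $\BbE[\log R_k\mid\F_{k-1}]\le 0$ by Jensen, with equality iff $R_k\equiv 1$. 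Thus either $Y_n\equiv 1$ (forcing $\nu_h=\mu$, impossible under the hypothesis) or $\Lambda<0$ and $\nu_h\perp\mu$.

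To finish I would make the Jensen strictness quantitative: I must show that under the stationary measure the conditional law of $R_k$ given $\xi_{k-1}$ is genuinely non-degenerate with positive probability, i.e.\ $R_k$ is not $\mu$-a.s.\ equal to $1$. This is exactly the role of the theorem's hypothesis: if for the given $m$ the function $x\mapsto\|D_0 M_m(x)\cdots M_1(x)h|_{V_0}\|$ is non-constant, then already at level $m$ the values $\nu_h(T_w)$, $w\in W_m$, are not all equal to $3^{-m}$, so some $R_k\not\equiv 1$ on a set of positive $\mu$-measure; combined with the Markov chain being irreducible (so the chain visits the relevant part of $\BbP^1$ infinitely often a.s.), this yields $\BbE[\log R_k\mid\F_{k-1}]<0$ on a positive-measure set infinitely often, hence $\sum_k \BbE[\log R_k\mid\F_{k-1}]=-\infty$ $\mu$-a.e., hence $Y_\infty=0$ $\mu$-a.e.\ by the martingale/Borel--Cantelli argument used in \cite[Theorem~5.1]{BST99}. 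The main obstacle I anticipate is the verification that the projective Markov chain driven by $A_1,A_2,A_3$ restricted to $\langle h_2,h_3\rangle$ is irreducible with a unique stationary measure and that the cocycle $\psi$ is bounded — this requires knowing the $A_j$ explicitly (available from \cite[Theorem~3.3]{A++16}) and checking that no common invariant line exists, which is equivalent to the non-constancy hypothesis and can be cross-checked against the eigenvector computation in Remark~\ref{rem evs of Aj}(ii) showing the $\lambda_2$-eigenlines of the three $A_j$ differ.
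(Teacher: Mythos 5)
Your setup coincides with the paper's: the identity $\nu_h(T_{w_1\ldots w_n})=\|D_0M_n(x)\cdots M_1(x)h_{|_{V_0}}\|^2$, the resulting ratio $\nu_h(T_{w_1\ldots w_n})/\mu(T_{w_1\ldots w_n})=3^n\|D_0M_n(x)\cdots M_1(x)h_{|_{V_0}}\|^2$, and the martingale property (which is the identity $\|D_0h_{|_{V_0}}\|^2=3\int_{K_\infty}\|D_0M_1(x)h_{|_{V_0}}\|^2\,\mu(dx)$ applied cellwise) are all correct, and reducing singularity to $Y_\infty=0$ $\mu$-a.e.\ is an acceptable substitute for the paper's Lebesgue-differentiation lemma. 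The gap is in the step that actually forces $Y_\infty=0$, i.e.\ the heart of the theorem. You outsource it to a projective Markov chain on the non-constant directions having a \emph{unique} stationary measure, plus Birkhoff, and you assert that this irreducibility/uniqueness ``is equivalent to the non-constancy hypothesis.'' It is not: the hypothesis concerns one function $h$ and one level $m$ (the $3^m$ numbers $\|D_0A_{w_m}\cdots A_{w_1}h_{|_{V_0}}\|$ are not all equal), whereas uniqueness of the stationary measure for the chain driven by $A_1,A_2,A_3$ modulo constants is a Furstenberg-type property of the matrices alone (strong irreducibility/contraction), which you neither verify nor deduce from the hypothesis --- you flag it yourself as the main obstacle. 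Your fallback is also not a proof: knowing that the conditional law of $R_k$ is non-degenerate for the \emph{initial} direction does not propagate to later times without the very recurrence you have not established; ``negative conditional expectation on a set of positive measure infinitely often'' does not give $\sum_k\BbE[\log R_k\mid\F_{k-1}]=-\infty$ a.e.; and even that divergence does not give $\log Y_n\to-\infty$ a.e.\ without a martingale strong law controlling the fluctuations. Likewise the dichotomy ``$Y_n\equiv1$ or $\Lambda<0$'' conflates the stationary regime with the chain started at the direction of $h_{|_{V_0}}$.

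The paper gets around all of this with an elementary uniform estimate instead of a dynamical one. On $\Hm_1=\{h$ non-constant, $\|D_0h_{|_{V_0}}\|=1\}$, strict Jensen (strict precisely because of the non-constancy hypothesis) together with Cauchy--Schwarz and the exact identity $3^m\int_{K_\infty}\|D_0M_m(x)\cdots M_1(x)h_{|_{V_0}}\|^2\,\mu(dx)=\|D_0h_{|_{V_0}}\|^2$ gives $\beta:=\sup_{h\in\Hm_1}\int_{K_\infty}\log\|D_0M_m(x)\cdots M_1(x)h_{|_{V_0}}\|\,\mu(dx)<-\frac{m}{2}\log 3$. Renormalizing the vector after each block of $m$ letters and using independence of the blocks yields $\int_{K_\infty}\log\|D_0M_{mn}(x)\cdots M_1(x)h_{|_{V_0}}\|\,\mu(dx)\le n\beta$, and Furstenberg's theorem upgrades this to an a.e.\ limit bounded by $\beta/m<-\frac{1}{2}\log 3$, which beats the factor $3^n$ and gives $\nu_h(T_{w_1\ldots w_n})/\mu(T_{w_1\ldots w_n})\to 0$ $\mu$-a.e. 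The supremum over $\Hm_1$ is exactly what replaces your unverified unique stationary measure: it controls the $m$-step contraction uniformly in the current direction, so no irreducibility of a projective chain is needed. You could keep your martingale framework and splice in this $\beta$-argument to get $\frac{1}{n}\log Y_n$ bounded above by a negative constant a.e.; as written, however, the decisive step is missing.
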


The proof of this theorem essentially consists in proving the condition stated in the following lemma, which is a consequence of the generalized Lebesgue differentiation theorem for metric measure spaces.
\begin{lemma}\label{prop Leb diff thm}
The measure $\nu_h$ is singular with respect to $\mu$ if for $\mu$-a.e. $x\in C_\infty$
\[
\lim_{n\to\infty}\frac{\nu_h(T_{w_1\ldots w_n})}{\mu(T_{w_1\ldots w_n})}=0,
\]
where $x=\bigcap_{n\geq 1}T_{w_1\ldots w_n}$.
\end{lemma}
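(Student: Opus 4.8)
\textbf{Proof plan for Lemma~\ref{prop Leb diff thm}.}

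The plan is to view this as an instance of the abstract Lebesgue differentiation / density theorem for the metric measure space $(C_\infty, |\cdot|, \mu)$, where $C_\infty$ is the Cantor dust and $\mu$ the Bernoulli $(\tfrac13,\tfrac13,\tfrac13)$ measure. First I would record that $\mu$ is a doubling measure on $C_\infty$: because each $G_i$ is a similarity with ratio $\alpha/2$ and the three images $G_i(C_\infty)$ are uniformly separated, an $n$-cell $T_{w_1\ldots w_n}$ has diameter comparable to $(\alpha/2)^n$ and $\mu$-mass exactly $3^{-n}$, and every Euclidean ball $B(x,r)$ with $x\in C_\infty$ contains, and is contained in, a controlled finite number of cells of the appropriate level; hence $\mu(B(x,2r)) \le C\,\mu(B(x,r))$. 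With $\mu$ doubling, the metric-space Lebesgue differentiation theorem applies to the (finite, hence locally integrable) measure $\nu_h$: decomposing $\nu_h = \nu_h^{ac} + \nu_h^{sing}$ with respect to $\mu$, for $\mu$-a.e.\ $x$ one has
\[
\lim_{r\to 0}\frac{\nu_h(B(x,r))}{\mu(B(x,r))} = \frac{d\nu_h^{ac}}{d\mu}(x),
\]
and in particular this limit is finite $\mu$-a.e., while it equals $+\infty$ for $\nu_h^{sing}$-a.e.\ $x$.

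Next I would transfer this statement from balls to the cylinder filtration $\{T_{w_1\ldots w_n}\}_n$ shrinking to $x$. Since the cells form a regular nested family with $\diam T_{w_1\ldots w_n}\asymp(\alpha/2)^n \to 0$ and, by doubling, $\mu(T_{w_1\ldots w_n}) \asymp \mu(B(x, \diam T_{w_1\ldots w_n}))$ with a uniform constant, a standard comparison argument (each cell sandwiched between two concentric balls of comparable radii, plus the Vitali-type covering that underlies the differentiation theorem) shows that the martingale-type limit $\lim_{n\to\infty}\nu_h(T_{w_1\ldots w_n})/\mu(T_{w_1\ldots w_n})$ exists and agrees with the ball-limit for $\mu$-a.e.\ $x$; this is exactly the setting where the cylinder sets generate the Borel $\sigma$-algebra and $\mu$ is doubling. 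Consequently the hypothesis $\lim_{n\to\infty}\nu_h(T_{w_1\ldots w_n})/\mu(T_{w_1\ldots w_n}) = 0$ for $\mu$-a.e.\ $x\in C_\infty$ forces $\frac{d\nu_h^{ac}}{d\mu} = 0$ $\mu$-a.e., i.e.\ $\nu_h^{ac} = 0$, so $\nu_h = \nu_h^{sing}$ is singular with respect to $\mu$. (Here I use that $\supp\nu_h = C_\infty = \supp\mu$ from Theorem~\ref{thm pd is a measure}, so there is no mass of $\nu_h$ off $C_\infty$ to worry about.)

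The main obstacle I expect is the passage from balls to cylinders: one must be careful that a Euclidean ball centered at $x\in C_\infty$ does not ``spill over'' into sibling cells in an uncontrolled way, and that the nested cells $T_{w_1\ldots w_n}$ are genuinely comparable to balls in both inclusions with constants independent of $x$ and $n$. This is guaranteed by the strong separation (the open set condition in its strong form) for $\{G_1,G_2,G_3\}$, which holds because the $e_{ij}$ are removed and the $G_i(C_\infty)$ are pairwise positively separated; I would state this separation explicitly and deduce the two-sided ball-cell comparison $B(x, c_1(\alpha/2)^n)\cap C_\infty \subseteq T_{w_1\ldots w_n} \subseteq B(x, c_2(\alpha/2)^n)$ as the key quantitative input, after which the differentiation theorem does the rest. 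A purely measure-theoretic alternative, avoiding the metric theorem, is to invoke the martingale convergence theorem for the Radon–Nikodym derivative along the filtration generated by the cylinders directly; I would mention this as the cleaner route if one prefers not to verify doubling, since $\nu_h \ll \mu$ on $\sigma(\text{cylinders})$ would already give a.e.\ convergence of the ratios to $d\nu_h/d\mu$, and vanishing of that limit on a set of full $\mu$-measure is incompatible with any nonzero absolutely continuous part.
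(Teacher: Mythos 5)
Your proposal is correct and rests on the same key tool as the paper: the generalized Lebesgue differentiation theorem for the Ahlfors regular, hence volume doubling, metric measure space $(C_\infty,\mu)$, cf.~\cite[Theorem 1.8]{Hei01}. The differences are organizational but worth noting. The paper argues by contradiction: it assumes $\nu_h\ll\mu$, writes $\nu_h(T_{w_1\ldots w_n})/\mu(T_{w_1\ldots w_n})$ as an average of the Radon--Nikodym density $f$ over cells, and concludes $f=0$ $\mu$-a.e. You instead apply the differentiation theorem to the Lebesgue decomposition $\nu_h=\nu_h^{\mathrm{ac}}+\nu_h^{\mathrm{sing}}$ and deduce $\nu_h^{\mathrm{ac}}=0$; this is the cleaner formulation, since merely ruling out absolute continuity is a priori weaker than singularity, and your version delivers singularity directly. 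You are also more explicit than the paper about the passage from balls (for which the cited theorem is stated) to the cylinder cells $T_{w_1\ldots w_n}$, via the two-sided ball--cell comparison coming from the strong separation of $\{G_1,G_2,G_3\}$ together with doubling; the paper applies the theorem to cell averages directly and leaves this comparison implicit, so your attention to it is a genuine strengthening rather than a detour. Your suggested martingale alternative along the cylinder filtration would also work and would bypass the metric doubling input altogether.
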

\begin{proof}
For each $n\geq 1$, $T_{w_1\ldots w_n}$ is a neighborhood of $x$ and $\lim_{n\to\infty}\mu(T_{w_1,\ldots w_n})=0$. Let us suppose that $\nu_h$ is absolutely continuous with respect to $\mu$. Then, there exists a measurable function $f$ (the Radon-Nikodym derivative of $\nu_h$ with respect to $\mu$) such that
\begin{equation}\label{eq RN derivative}
\frac{\nu(T_{w_1\ldots w_n})}{\mu(T_{w_1\ldots w_n})}
=\frac{1}{\mu(T_{w_1\ldots w_n})}\int_{T_{w_1\ldots w_n}}f(x)\,\mu(dx).
\end{equation}
Due to the definition of $\mu$ and since $C_\infty$ is self-similar, it is Ahlfors regular (i.e. $\mu(B_{d_E}(x,r)\cap C_\infty)\asymp r^{\gamma}$, in this case with $\gamma$ being the Hausdorff dimension of $C_\infty$). Thus, $(C_\infty,\mu)$ equipped with the Euclidean metric is volume doubling 
and the generalized Lebesgue differentiation theorem, see e.g.~\cite[Theorem 1.8]{Hei01} holds, so that~\eqref{eq RN derivative} equals $f(x)$ for $\mu$-a.e. $x\in C_\infty$. By assumption, this implies that $f$ is zero $\mu$-a.e., a contradiction.
\end{proof}

\begin{proof}[Proof of Theorem~\ref{thm singularity of pd}]
For any $h\in\Hm_{\SL}(K_\infty)$, $w\in W_m$ and $n$ large it holds that
\begin{align*}
\sum_{\substack{x,y\in T_w\cap V_n\\ \{x,y\}\in E_n}}\pdf{\net_{\varepsilon,n}}[h]_{xy}&= \sum_{\substack{x,y\in T_w\cap V_{n-m}\\ \{x,y\}\in E_n}}\pdf{\net_{\varepsilon,n}}[h\circ G_w]_{xy}\\
&=\sum_{\substack{x,y\in T_w\cap V_{n-m}\\ \{x,y\}\in E_{n-m}}}\pdf{\net_{\varepsilon,n-m}}[h\circ G_w]_{xy}=\pdf{\net_{\varepsilon,n-m}}[h\circ G_w{}_{|_{V_{n-m}}}]\\
&=\frac{\Re(\Zeff{\varepsilon})}{2|\Zeff{\varepsilon}|^2}\sum_{\{x,y\}\in E_0}|h\circ G_w(x)-h\circ G_w(y)|^2.
\end{align*}
Letting $\varepsilon\to 0_+$ and $n\to\infty$ in both sides of the equality yields
\begin{align}
\nu_h(T_w)&=\frac{\Re(\Zeff{\SL})}{2|\Zeff{\SL}|^2}\sum_{\{x,y\}\in E_0}|h\circ G_w(x)-h\circ G_w(y)|^2\nonumber\\ 
&=\|D_0A_{w_m}\cdots A_{w_1}h_{|_{V_0}}\|^2.\label{eq BST 5.2}
\end{align}

\medskip

On the other hand, it follows from~\eqref{eq self sim eq harmonic} and the definition of $\mu$ that 
\begin{align}
\|D_0h_{|_{V_0}}\|^2&=\langle D^2_0h_{|_{V_0}},h_{|_{V_0}}\rangle=\pdf{\net_0}[h_{|_{V_0}}]=\sum_{i=1}^3\pdf{\net_0}[h\circ G_i{}_{|_{V_0}}]\nonumber\\
&=\sum_{i=1}^3\langle D^2_0A_ih_{|_{V_0}},A_ih_{|_{V_0}}\rangle=\sum_{i=1}^3\|D_0A_ih_{|_{V_0}}\|^2\nonumber\\
&=3\sum_{i=1}^3\mu_i\|D_0A_ih_{|_{V_0}}\|^2
=3\int_{K_\infty}\|D_0M_1(x)h_{|_{V_0}}\|^2\mu(dx).\nonumber
\end{align}
and induction leads to
\begin{equation}\label{eq BST 5.4}
\|D_0h_{|_{V_0}}\|^2=3^n\int_{K_\infty}\|D_{P_0}M_n(x)\cdots M_1(x)h_{|_{V_0}}\|^2\mu(dx)
\end{equation}
for any $n\geq 1$. Notice that all computations are in fact basis independent.

\medskip

Furthermore, by assumption, there exists $m\geq 1$ such that Jensen's and the Cauchy-Schwartz inequality yield
\begin{align*}
&\int_{K_\infty}\log\|D_0M_m(x)\cdots M_1(x)h_{|_{V_0}}\|\,\mu(dx)<\log \int_{K_\infty}\|D_0M_m(x)\cdots M_1(x)h_{|_{V_0}}\|\,\mu(dx)\\
&\leq\frac{1}{2}\log \int_{K_\infty}\|D_0M_m(x)\cdots M_1(x)h_{|_{V_0}}\|^2\mu(dx)=\frac{1}{2}\log \frac{1}{3^m}\|D_0h_{|_{V_0}}\|^2,
\end{align*}
where last equality follows from~\eqref{eq BST 5.4}. Hence, 
\begin{equation}\label{eq BST 5.5}
\beta:=\sup_{h\in\Hm_1}\int_{K_\infty}\log\|D_0M_m(x)\cdots M_1(x)h_{|_{V_0}}\|\,\mu(dx)<-\frac{m}{2}\log 3,
\end{equation}
where $\Hm_1:=\{h\in\Hm_{\SL}(K_\infty)\text{ non-constant, }\|D_0h_{|_{V_0}}\|=1\}$.

\medskip

Let now $h\in\Hm_1$ and $n\geq 1$. Multiplying and dividing by $\|D_0M_{m(n-1)}(x)\cdots M_1(x)h_{|_{V_0}}\|$ and since the matrices $M_i(x)$ are statistically independent, Jensen's inequality yields
\begin{align*}
&\int_{K_\infty}\log\|D_0M_{mn}(x)\cdots M_{mn-m}(x)\cdots M_1(x)h_{|_{V_0}}\|\,\mu(dx)\\
&\leq\beta+\int_{K_\infty}\log\|D_0M_{m(n-1)}(x)\cdots M_1(x)h_{|_{V_0}}\|\,\mu(dx).
\end{align*}
By induction we obtain
\[
\int_{K_\infty}\log\|D_0M_{mn}(x)\cdots  M_1(x)h_{|_{V_0}}\|\,\mu(dx)\leq n\beta
\]
and thus
\begin{equation}\label{eq BST 5.6}
\int_{K_\infty}\log\|D_0M_{m_n}(x)\cdots  M_1(x)h_{|_{V_0}}\|\,\mu(dx)\leq \frac{m_n}{m}\beta
\end{equation}
for the subsequence $m_n=mn$. Consequently,
\begin{equation}\label{eq bound limsup}
\limsup_{m_n\to\infty}\frac{1}{m_n}\log\|D_0M_{m_n}(x)\cdots  M_1(x)h_{|_{V_0}}\|\leq\frac{1}{m}\beta<-\frac{1}{2}\log 3,
\end{equation}
where the existence of the limit is guaranteed by Furstenberg's Theorem~\cite{Fur73} because the matrices $M_{m_n}(x)$ are i.i.d. By definition of $\mu$ and~\eqref{eq BST 5.2} we thus have that
\[
\frac{\nu_h(T_{w_1\ldots w_n})}{\mu(T_{w_1\ldots w_n})}=3^n\|D_0M_{w_n}(x)\cdots M_{w_1}(x)h_{|_{V_0}}\|^2
\]
for $\mu$-a.e. $x\in C_\infty$, hence $\mu$-a.e. in $K_\infty$, and~\eqref{eq BST 5.6} yields
\begin{align*}
\frac{1}{n}\big(\log \|D_0M_{w_n}(x)\cdots M(x)_{w_1}h_{|_{V_0}}\|^2+n\log 3\big)<-n\log 3-n\log 3=0.
\end{align*}
Finally, this implies that
\begin{equation}\label{eq BST 3.7}
\lim_{n\to\infty}\frac{\nu_h(T_{w_1\ldots w_n})}{\mu(T_{w_1\ldots w_n})}=0
\end{equation}
for $\mu$-a.e. $x\in K_\infty$. By Proposition~\ref{prop Leb diff thm}, $\nu_h$ is singular with respect to $\mu$.
\end{proof}
\subsection*{Acknowledgments}
The author would like to thank A. Teplyaev and L. Rogers for very fruitful discussions.
\bibliographystyle{amsplain}
\bibliography{RefsACC}
\end{document}